\newtheorem{definition}{\bf Definition}
\newtheorem{proposition}{\bf Proposition}
\newcommand{\cmark}{\text{\ding{51}}}
\newcommand{\xmark}{\text{\ding{55}}}
\begin{document}
\title{Heterogeneous Drone Small Cells: \\ Optimal 3D Placement for Downlink Power Efficiency and Rate Satisfaction}

\author{Nima Namvar,~\IEEEmembership{member,~IEEE}, Fatemeh Afghah~\IEEEmembership{Senior member,~IEEE}, Ismail Guvenc, ~\IEEEmembership{Fellow,~IEEE}

\thanks{N. Namvar is with the School of Informatics, Computing and Cyber Systems (SICSS), Northern Arizona University, Flagstaff, AZ, USA e-mail: nima.namvar@nau.edu, F. Afghah is with the Department of Electrical \& Computer Engineering, Clemson University e-mail: fafghah@clemson.edu, 
I. Guvenc is with the Dept. Electrical and Computer Engineering, North Carolina State University, Raleigh, NC, USA e-mail: iguvenc@ncsu.edu.}
\thanks{Manuscript received XXX, XX, 2015; revised XXX, XX, 2015.}}


\markboth{IEEE Transactions on Vehicular Technology,~Vol.~XX, No.~XX, XXX~2022}
{}

\maketitle

\begin{abstract}
In this paper, we consider a heterogeneous repository of drone-enabled aerial base stations with varying transmit powers that provide downlink wireless coverage for ground users. One particular challenge is optimal selection and deployment of a subset of available drone base stations (DBSs) to satisfy the downlink data rate requirements while minimizing the overall power consumption. In order to address this challenge, we formulate an optimization problem to select the best subset of available DBSs so as to guarantee wireless coverage with some acceptable transmission rate in the downlink path. In addition to the selection of DBSs, we determine their 3D position so as to minimize their overall power consumption. Moreover, assuming that the DBSs operate in the same frequency band, we develop a novel and computationally-efficient beamforming method to alleviate the inter-cell interference impact on the downlink. We propose a Kalai–Smorodinsky bargaining solution to determine the optimal beamforming strategy in the downlink path to compensate for the impairment caused by the interference. Simulation results demonstrate the effectiveness of the proposed solution and provide valuable insights into the performance of the heterogeneous drone-based small cell networks.
\end{abstract}

\begin{IEEEkeywords}
Beamforming; Drone Base Station (DBS), Power efficiency, Resource optimization, 3D Deployment.
\end{IEEEkeywords}

\IEEEpeerreviewmaketitle

\section{Introduction}
The recent advances in the area of unmanned aerial vehicles (UAVs), commonly known as drones, have made it possible to widely deploy drones across a wide variety of  application domains ranging from surveillance to shipping and delivery, disaster management, geographic mapping, search and rescue, and wireless networking \cite{saad2020wireless}. In particular, the cellular telecommunications may avail from drone-mounted aerial base stations to satisfy the coverage and rate requirements of wireless users in areas which lack coverage or are heavily congested, such as hotspot areas \cite{azeri2019tutorial, mozaffari2019tutorial}. 

The altitude dimension and mobility which stem from the flying nature of drone base stations (DBSs), provide new degrees of freedom that a network operator can exploit to improve the design of airborne cellular systems. For instance, compared to terrestrial base stations, the DBSs benefit from a much higher likelihood of establishing line-of-sight (LoS) links towards ground stations by adjusting their altitude \cite{orsino2017effects}. Furthermore, DBSs are better equipped to cope with the mobility of ground users and environmental changes as compared to fixed ground base stations. All things considered, the salient attributes of DBSs such as flexible and on-demand deployment, strong LoS connection links, and additional design degrees of freedom, make them a promising solution to facilitate the vision of ubiquitous connectivity and enhanced network capacity in the next generation of broadband cellular networks \cite{li2018uav}. For example, Qualcomm has already announced its plan to employ DBSs as an enabler for everywhere anytime wireless connectivity in the upcoming fifth generation (5G) wireless networks \cite{Qualcomm}. Meanwhile, the "Flying Cow" project by AT\&T \cite{Aquila} leverages the UAV technology to create an aerial wireless network which provides ubiquitous Internet access to rural and remote areas up to 4G-LTE speeds.

Developing fully fledged drone-based wireless networks brings forward unique technical challenges that are rooted in unique features of DBSs which are inherently different from those of the conventional ground base stations:
\begin{itemize}
    \item The air-to-ground (AtG) wireless channel poses a new propagation environment with different characteristics compared to those of terrestrial channels. For instance, the high likelihood of existing a strong LoS component in the received signal as well as the airframe  shadowing  caused  by  the  structural design and rotation of the DBSs, are some unique features of the AtG channel \cite{SurveyChannel}.
    \item The deployment of DBSs is naturally done in 3D space and the flying nature of DBSs  mandates dynamic optimization of their 3D location for improved performance.
    \item When designing a drone-based wireless network, it is imperative to take into account the different power capabilities of the DBSs. Indeed, depending on the type of drone being utilized as a DBS, the range of transmit power and battery capacity can differ substantially from one DBS to another. These capabilities can directly impact the quality of service (QoS) that the DBSs provide for ground users \cite{galkin2019uavs}.
    \item Managing the network interference becomes more challenging when the DBSs are deployed to provide connectivity for ground users. This is partially due to the fact that ground users may receive strong LoS signals from multiple DBSs, which can significantly degrade the quality of the intended signal \cite{challita2019interference}. On the other hand, the limited on-board energy of the DBSs calls for effective and yet, computationally efficient algorithms to address the interference \cite{fouda2019interference}.
\end{itemize}

In this paper, a novel technique is developed for optimally selecting and deploying a heterogeneous set of DBSs to provide wireless coverage for ground users while minimizing the aggregate DBSs' transmit power needed to satisfy the downlink data rate. In particular, our contributions in this work can be
summarized as follows:
\begin{itemize}
    \item Considering a repository of heterogeneous DBSs with varying transmit power and flight altitude, we jointly derive the optimal resource allocation strategy (i.e., selecting a subset of available DBSs) and 3D placement of the DBSs. The goal is to minimize the total transmit power while maintaining the desired downlink data rate. In contrast to the existing literature, the type and the number of DBSs that need to be deployed is not known a priori.
    \item  Assuming that the DBSs are operating in the same spectrum, we devise a novel beamforming method based on the Nash bargaining game to alleviate the impact of inter-cell interference between the DBSs.
\end{itemize}

To this end, we decompose the optimization problem into two subproblems that will be solved iteratively. In the first problem, assuming that the DBSs are equipped with a directional antenna, we solve an optimization problem to determine the best subset of DBSs as well as their corresponding 3D location that can provide a reasonable signal to noise ratio (SNR) at the ground receivers. In the second subproblem, given the topology of the network resulting from the first subproblem, we propose a bargaining game between the interfering DBSs to find the optimal downlink beamforming that increases the data rate in the interference channel. If the optimal beamforming resulted from the second subproblem achieves the required data rate threshold, the iteration is stopped. Otherwise, the topology of the DBSs needs to be adjusted to reduce the interference. This is an iterative process in which the results of each subproblem are used in the other subproblem for the next iteration. These computations are performed by the control center until the location of the DBSs, device association, and transmit power of the DBSs are obtained.

\begin{table*}[]
\caption{Literature review for DBS placement optimization.}
\label{tab:literature}
\setlength{\tabcolsep}{3pt}
\begin{tabular}{llllllll}
\hline
\textbf{Ref.}      & \multicolumn{1}{l}{\textbf{Goal}}                                                                                                                        & \begin{tabular}[l]{@{}c@{}}\textbf{Downlink rate} \\ \textbf{satisfaction}\end{tabular} & \multicolumn{1}{l}{\begin{tabular}[l]{@{}l@{}}\textbf{DBS power}\\ \textbf{optimization}\end{tabular}} & \begin{tabular}[l]{@{}l@{}}\textbf{Interference} \\ \textbf{management}\end{tabular} & \begin{tabular}[l]{@{}l@{}}\textbf{Heterogeneous} \\ \textbf{DBSs}\end{tabular} & \begin{tabular}[l]{@{}l@{}}\textbf{Predetermined}\\ \textbf{number of DBSs}\end{tabular} & \begin{tabular}[l]{@{}l@{}}\textbf{3D} \\ \textbf{placement}\end{tabular} \\ \hline
\cite{Azari}        & \begin{tabular}[c]{@{}l@{}}Optimal flight altitude of a DBS\\ for a target outage probability\\ in a Rician fading channel\end{tabular}           & $\xmark$                                                                     & $\xmark$                                                                              & $\xmark$                                                           & $\xmark$                                                            & $\cmark$                                                                   & $\xmark$                                                      \\ \hline
\cite{al2014optimal}        & \begin{tabular}[c]{@{}l@{}}Optimal altitude of a single DBS\\ for maximum coverage area\end{tabular}                                            & $\xmark$                                                                     & $\xmark$                                                                              & $\xmark$                                                           & $\xmark$                                                            & $\cmark$                                                                    & $\xmark$                                                      \\ \hline
\cite{mozaffariDesign}        & \begin{tabular}[c]{@{}l@{}}Optimal distance between two\\ interfering DBSs for best coverage\\ performance\end{tabular}                           & $\xmark$                                                                     & $\cmark$                                                                                   & $\xmark$                                                      & $\xmark$                                                            & $\cmark$                                                                   & $\cmark$                                                     \\ \hline
\cite{mozaffariLetter}        & \begin{tabular}[c]{@{}l@{}}Optimal placement of symmetric\\ DBSs with same altitude and transmit\\ power to cover a given area\end{tabular}       & $\xmark$                                                                     & $\cmark$                                                                                   & $\xmark$                                                      & $\xmark$                                                            & $\cmark$                                                                    & $\xmark$                                                      \\ \hline
\cite{hayajneh2016drone}        & \begin{tabular}[c]{@{}l@{}}Quantify the impact of the number\\ and altitude of DBSs on the coverage\\ probability\end{tabular}                    & $\xmark$                                                                     & $\cmark$                                                                                   & $\xmark$                                                      & $\xmark$                                                            & $\xmark$                                                               & $\cmark$                                                     \\ \hline
\cite{3Dplacement}        & \begin{tabular}[c]{@{}l@{}}Optimal placement of a single DBS to\\ maximize the number of covered\\ ground users\end{tabular}                      & $\xmark$                                                                     & $\cmark$                                                                                   & $\xmark$                                                      & $\xmark$                                                            & $\cmark$                                                                    & $\cmark$                                                     \\ \hline
\cite{kalantari2016number}       & \begin{tabular}[c]{@{}l@{}}Optimal 3D placement and the number\\of required DBSs to cover a set of\\ ground users\end{tabular}                   & $\cmark$                                                                    & $\xmark$                                                                               & $\xmark$                                                          & $\xmark$                                                            & $\xmark$                                                               & $\cmark$                                                     \\ \hline
\cite{munaye2019uav}        & \begin{tabular}[c]{@{}l@{}}Optimal positioning of a single DBS\\ for users' throughput maximization\end{tabular}                                & $\cmark$                                                                    & $\xmark$                                                                               & $\xmark$                                                          & $\xmark$                                                            & $\xmark$                                                               & $\xmark$                                                     \\ \hline
\cite{liu2019reinforcement}        & \begin{tabular}[c]{@{}l@{}}Optima 3D placement and movement\\of DBSs for improving QoE\end{tabular}                                            & $\cmark$                                                                    & $\xmark$                                                                               & $\xmark$                                                          & $\cmark$                                                           & $\cmark$                                                                    & $\cmark$                                                     \\ \hline
\cite{mozaffariD2D}        & \begin{tabular}[c]{@{}l@{}}Optimal deployment of a single DBS\\ to maximize the sum-rate\end{tabular}                                           & $\cmark$                                                                    & $\cmark$                                                                                   & $\xmark$                                                      & $\xmark$                                                            & $\cmark$                                                                    & $\cmark$                                                     \\ \hline
\cite{kumbhar2019heuristic}        & \begin{tabular}[c]{@{}l@{}}Joint optimization of DBSs' location\\ and intercell interference coordination\\in LTE-advanced networks\end{tabular} & $\xmark$                                                                     & $\xmark$                                                                               & $\cmark$                                                          & $\cmark$                                                           & $\cmark$                                                                    & $\cmark$                                                    \\ \hline
\cite{namvar2019heterogeneous}        & \begin{tabular}[c]{@{}l@{}}Joint optimization of DBSs' location\\ and transmit power for providing\\ maximum coverage area\end{tabular}            & $\xmark$                                                                     & $\cmark$                                                                                   & $\xmark$                                                      & $\cmark$                                                           & $\xmark$                                                               & $\cmark$ \\ \hline
This work       & \begin{tabular}[c]{@{}l@{}}3D location optimization of DBSs for minimizing\\ the total transmit power while satisfying\\ the downlink rate requirement\end{tabular}           & $\cmark$                                                                     & $\cmark$                                                                              & $\cmark$                                                           & $\cmark$                                                            & $\xmark$                                                                   & $\cmark$                                                      \\ \hline
\end{tabular}
\end{table*}

The rest of this paper is organized as follows. Section~\ref{sec:literature} provides an overview of the recent state of the art for 3D placement of the DBSs. Section~\ref{sec:SySmodel} presents the system model and describes the air-to-ground channel model as well as the optimal flight altitude of each DBS as a function of their transmit power. The problem formulation is presented in Section~\ref{sec:ProblemFormulation}. The optimal selection and the deployment of the DBSs is investigated in Section~\ref{sec:subproblem1} while the interference management is addressed in Section~\ref{sec:subproblem2}. Numerical results are provided in Section~\ref{sec:simulations}. Finally, Section~\ref{sec:conclusion} concludes the paper and discusses the future path of this research.

\section{Literature Review}\label{sec:literature}
The envisioned opportunities for employing DBSs as a new tier for wireless networking has attracted remarkable recent research activities in the area. A substantial portion of the literature on DBSs is devoted to the AtG channel modeling. For instance, the authors in \cite{al2014modeling} provided a statistical generic AtG propagation model for Low Altitude Platform (LAP) systems in which the probability of LoS channel is derived as a function of the elevation angle. The work in \cite{feng2006path} studies the effects of shadowing and pathloss for UAV communications in dense urban environments. As discussed in \cite{holis2008elevation}, due to the pathloss and shadowing, the characteristics of the AtG channel depend on the height of the DBSs. A comprehensive survey on available AtG propagation models can be found in \cite{SurveyChannel}.

The 3D deployment of the DBSs is arguably the most influential design consideration in drone-based communications as it directly impacts the coverage, QoS, and life expectancy of the network \cite{mozaffari2019tutorial}. The optimal 3D placement of DBSs is a challenging task due to its dependency on the environmental factors (e.g., size and shape of the area), the AtG channel which itself is a function of DBS's altitude, and the location and/or distribution of the users on the ground. Consequently, the optimal deployment of DBSs has attracted considerable attention in the recent state of the art.

The optimal flight altitude of a single UAV-BS operating under the Rician fading channel is derived in \cite{Azari}. The authors in \cite{al2014optimal} developed an analytical framework to derive the optimal altitude of a single DBS, enabling it to achieve a maximum coverage radius on the ground. This result was extended to the case of two identical DBSs in \cite{mozaffariDesign}. The work in \cite{mozaffariLetter} investigated the problem of optimal 3D placement of a symmetric set of DBSs having the same transmit power and altitude. The authors in \cite{hayajneh2016drone} employed tools from stochastic geometry to analyze the impact of a DBS’s altitude on the sum-rate maximization. In \cite{3Dplacement}, a UAV-enabled small cell placement optimization problem is investigated in the presence of a terrestrial wireless network to maximize the number of users that can be covered. Furthermore, the authors in \cite{kalantari2016number} proposed a deployment plan for DBSs to minimize the number of drones required for serving the ground users within a given area. Similar works can be found in \cite{munaye2019uav,liu2019reinforcement,orfanus2016self, mozaffariD2D,kumbhar2019heuristic, merwaday2016improved, athukoralage2016regret}.

While these studies address important drone-based communication problems, they mainly limit their discussions to cases in which there exists only a single DBS or multiple identical DBSs with the same capabilities. Moreover, the number of DBSs to be deployed in a given area is assumed to be known in advance. In practice, however, one might have a repository of various types of drones with diverse capabilities in terms of flight altitude and transmit power. In this context, the exact number and the type of DBSs that need to be deployed depend on the target area and the number of ground users to be served. For instance, having a large set of DBSs, one may need to deploy only a few DBSs in order to cover a small area of interest. Otherwise, the efficiency of resource allocation may drop significantly due to over-allocation of resources. On the other hand, such over-allocation of resources may lead to excessive interference between the DBSs, which in turn, deteriorates the overall quality of service (QoS). In \cite{namvar2019heterogeneous}, the authors proposed a novel solution to handle the resource allocation and placement of the DBSs for a rectangular area of interest. However, the work in \cite{namvar2019heterogeneous} does not consider the location of the users and the placement is optimized to avoid interference between the DBSs. The assumptions of the related works are summarized in table \ref{tab:literature}.

\section{System Model}\label{sec:SySmodel}
Consider a heterogeneous repository of DBSs in which the DBSs are of various types depending on the range of their transmit power. For example, smaller drones may not afford high transmit powers, while larger drones, aerostats, and high altitude platforms, may support increasingly higher transmit powers.  Let $\mathcal{D} = \{D_i\}_{i = 1}^{N}$ denote the set of $N$ available drones in the repository while $P^t_i$ represents the transmit power of drone $D_i$. We further assume that $P_i^t\in[P_i^{\text{min}},P_i^{\text{max}}]$. There are $K$ mobile ground users distributed in a 2D geographical area with low to medium mobility. Let $\mathcal{T} = \{T_j\}_{j = 1}^{K}$ be the set of ground users. Our aim is to allocate available resources, i.e., the DBSs, to provide wireless coverage for the ground users with minimal aggregate transmit power. The type and the number of DBSs to be deployed depends on the number and the distribution of ground users.

Note that as we seek to optimize the allocation of the DBSs with minimum aggregate power, and we do not need to cover a region while there is no user there. As mentioned previously, one important feature of the DBSs is their ability to move and adapt their location to best serve the ground users. In this work, we find the placement of the DBSs for one snapshot of the users positions. The movement of the DBSs has been studied in \cite{mozaffari2016optimal,mozaffari2017wireless} by adopting the optimal transport theory framework. Moreover, the work in \cite{9037325} provides a good summary of the DBS trajectory optimization methods.

We further assume that the DBSs are connected via satellite links or long-range cellular backhaul links. Fig.~\ref{fig:sys} show the system model. In this paper, we consider drones as quasi-stationary low altitude platform (LAP). Note that although the LAP is quasi-stationary, the DBSs can hover at different altitudes to achieve their maximum possible coverage radius according to their transmit power. Moreover, the DBSs can reposition themselves to cope with the mobility of the ground users. In this regard, we seek to optimize the  location of the DBSs in order to provide wireless coverage with minimum energy consumption using the available drones in the repository.

\subsection{Air-to-Ground (AtG) Channel Model}
Selecting the proper AtG channel model is the crucial step in formulating the DBS placement problem. There are many empirical and analytical studies on AtG channel modeling in the literature. However, the majority of authors in this field have adopted the model presented in \cite{al2014modeling} as an accurate and convenient representation of AtG channel. A brief discussion of AtG is outlined in this section.

\begin{figure}
    \centering
    \includegraphics[width=9cm]{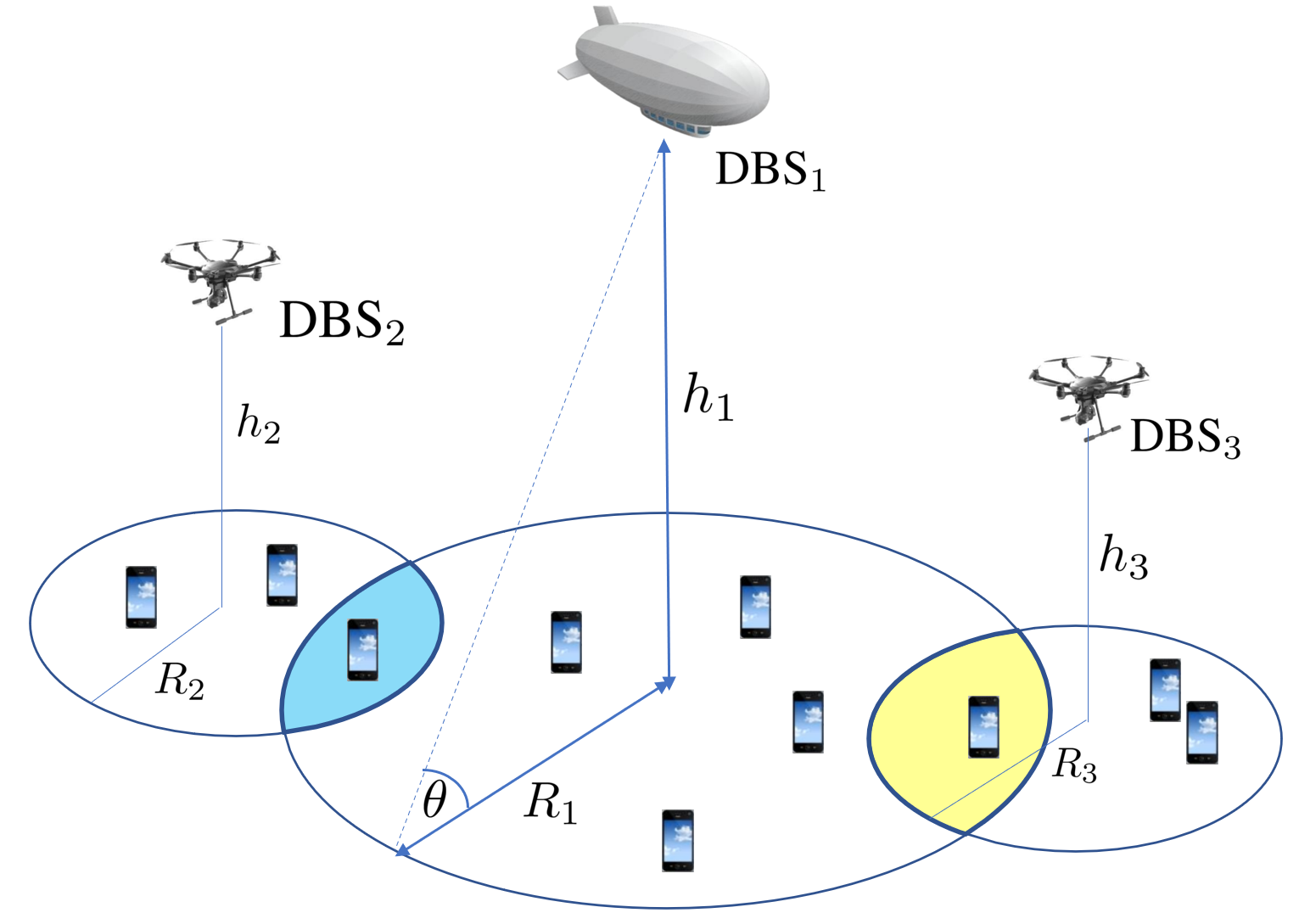}
    \caption{A heterogeneous set of DBSs with varying transmit power and altitude provide service for ground users. The overlapping areas undergo severe intercell interference on the downlink.} \vspace{-.3cm}
    \label{fig:sys}
\end{figure}

The radio signal from a LAP base station reaches its destination in accordance to two main propagation groups. The first group corresponds to receiving a LoS signal while the second group corresponds to receiving a strong non-LoS (NLoS) signal due to reflections and diffractions. These groups can be  considered separately with different probabilities of occurrence which depend on the environmental factors such as the density and height of buildings, and the elevation angle. In this work, we adopt the model presented in \cite{al2014modeling} for characterizing the AtG channels for LAP systems.

The AtG channel is modeled as a Bernoulli random variable for the LoS and NLoS paths. The corresponding probabilities of a LoS ($\Psi_\text{LoS}$) and NLoS ($\Psi_\text{NLoS}$ ) transmission between a transmitter and a receiver are given by:
\begin{align}
    \Psi _\text{LoS}&= \left[1+a\exp\left(-b\left(\frac{180 \theta}{\pi}-a\right)\right)\right]^{-1},\\
    \Psi _\text{NLoS}&= 1-\Psi _\text{LoS},
\end{align}
in which the constant parameters $\alpha$ and $\beta$ are determined by the environment, $\theta = \arctan(\frac{h}{r})$ is the elevation angle, $h$  is the altitude of DBS, and $r$ is the radial distance, respectively.

As it is difficult to determine whether a particular channel is LoS or NLoS, it is customary to consider the spatial expectation of the pathloss over LoS and NLoS links rather than the exact values of pathloss. The mean pathloss $\Gamma(\text{dB})$, is given by \cite{al2014modeling}:
\begin{equation}\label{expectedPL}
  \Gamma (\text{dB}) = \text{FSPL} + \eta_\text{LoS}\Psi _\text{LoS} + \eta_\text{NLoS}\Psi _\text{NLoS},
\end{equation}
where $\eta_\text{LoS}$ and $\eta_\text{NLoS}$ denote the excessive pathloss in LoS and NLoS links while $\text{FSPL} = 20\log\big(\frac{4\pi f_c d}{c}\big)$ is the free space pathloss, in which $f_c$ is the carrier frequency, $c$ is the speed of light, and $d = \sqrt{h^2+r^2}$ is the distance between the DBS and a ground point located at radial distance $r$.

By substituting $\Psi_\text{LoS}$ and $\Psi_\text{LoS}$ in (\ref{expectedPL}), we can see that $\Gamma$ is a function of $h$ and $r$, implying that the path loss is a function of the altitude and coverage of the DBS. Indeed, for a given $\Gamma$, the coverage of a DBS is a function of its altitude. The relationship between $\Gamma$, $h$, and $r$ is captured by the following:
\begin{equation}\label{expectedPL2}
   \Gamma= 20\log(d) + \frac{A}{1 + a \exp\big(-b(\theta-a)\big)} + B,
\end{equation}
in which $A = \eta_\text{LoS} - \eta_\text{NLoS}$ and $B =  \eta_\text{NLoS} + 20 \log\big(\frac{4\pi f_c}{c}\big)$.

\subsection{The Notion of Coverage and its Shape}
Having defined the expected pathloss in (\ref{expectedPL}), the received signal power at a ground receiver located in radial distance $r_i$ from the ground image of the DBS is given by
\begin{equation}\label{Preceived}
  P^{r} (\text{dB}) = P^t_i(\text{dB}) - \Gamma (\text{dB}).
\end{equation}

\begin{definition}
We define the service threshold in terms of the minimum allowable received signal power for a successful transmission. Any point in the area is covered if its received signal power is greater than a threshold $\epsilon$,
\begin{equation}\label{LossThreshold}
  \Gamma (\text{dB}) \leq P^t_i(\text{dB}) - \epsilon.
\end{equation}
\end{definition}

\begin{proposition}
For any given values of transmit power and flight altitude, the coverage area of a DBS is a circular disk.
\end{proposition}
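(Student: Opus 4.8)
The plan is to exploit the rotational symmetry of the channel model and reduce the claim to a one-dimensional monotonicity statement. First I would observe that, for a fixed altitude $h$, the expected pathloss $\Gamma$ in (\ref{expectedPL2}) depends on a ground point only through its radial distance $r$ from the DBS's ground image, since both $d=\sqrt{h^2+r^2}$ and $\theta=\arctan(h/r)$ are functions of $r$ alone. Hence the coverage set defined by (\ref{LossThreshold}) is invariant under rotations about the DBS's ground image, and it suffices to show that, as a subset of $r$, it has the form $\{r\le r_{\max}\}$ for some $r_{\max}\ge 0$. This in turn follows once $r\mapsto\Gamma(r)$ is shown to be strictly increasing on $[0,\infty)$.

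To prove monotonicity I would differentiate (\ref{expectedPL2}) in $r$ and handle the three terms separately. The constant $B$ contributes nothing. The free-space term $20\log d=20\log\sqrt{h^2+r^2}$ has derivative proportional to $r/(h^2+r^2)$, which is strictly positive for $r>0$. For the sigmoidal term $g(r)=A\bigl/\bigl(1+a\exp(-b(\theta-a))\bigr)$, I would write $\frac{dg}{dr}=\frac{dg}{d\theta}\cdot\frac{d\theta}{dr}$; for the positive environment constants $a,b$ the factor $\frac{dg}{d\theta}$ carries the sign of $A=\eta_\text{LoS}-\eta_\text{NLoS}$, while $\theta=\arctan(h/r)$ is strictly decreasing in $r$, so $\frac{d\theta}{dr}<0$ and $\frac{dg}{dr}$ has the sign of $-A$. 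Invoking the physically standard fact that the excess NLoS pathloss exceeds the excess LoS pathloss, i.e.\ $\eta_\text{NLoS}>\eta_\text{LoS}$ and hence $A<0$, this term is non-decreasing in $r$. Adding the strictly positive free-space derivative gives $\Gamma'(r)>0$ for all $r>0$, so $\Gamma$ is continuous and strictly increasing on $[0,\infty)$, with $\Gamma(0)$ finite and $\Gamma(r)\to\infty$ as $r\to\infty$.

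Finally, writing $\Gamma_\text{th}=P^t_i-\epsilon$ for the effective pathloss budget, I would close with the intermediate value theorem: if $\Gamma_\text{th}<\Gamma(0)$ no ground point is covered (a degenerate disk of radius zero), and otherwise there is a unique $r_{\max}>0$ with $\Gamma(r_{\max})=\Gamma_\text{th}$, and by strict monotonicity the coverage condition $\Gamma(r)\le\Gamma_\text{th}$ holds exactly for $r\le r_{\max}$. Combined with the rotational symmetry, the coverage area is the closed disk of radius $r_{\max}$ centered at the DBS's ground image.

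I expect the main obstacle to be the monotonicity of the sigmoidal middle term, which forces one to pin down the sign of $A$: if $\eta_\text{LoS}>\eta_\text{NLoS}$ were permitted then $g$ would be decreasing in $r$ and, near $r=0$ where the free-space derivative vanishes, could in principle overwhelm it, destroying the disk structure. Relying on $\eta_\text{NLoS}>\eta_\text{LoS}$ (true for the environment parameter sets of \cite{al2014modeling}) is what makes the argument go through cleanly; everything else is elementary calculus together with the intermediate value theorem.
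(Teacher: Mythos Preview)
Your argument is correct and in fact more complete than the paper's own proof. The paper's proof consists solely of the rotational-symmetry observation you start with: for fixed $h$, the pathloss in (\ref{expectedPL2}) depends on a ground point only through its radial distance $r$, so level sets of $\Gamma$ are circles centered at the DBS's ground image, and the paper simply concludes from this that the coverage region is a disk. But rotational symmetry alone yields only that the coverage set is rotationally invariant; without monotonicity it could in principle be an annulus or a union of concentric annuli. Your additional step---differentiating (\ref{expectedPL2}) in $r$, splitting off the free-space and sigmoidal contributions, and using $\eta_\text{NLoS}>\eta_\text{LoS}$ (hence $A<0$) to show both pieces are nondecreasing with the free-space piece strictly increasing---is exactly what is needed to promote ``rotationally invariant'' to ``disk,'' and the paper tacitly relies on it (indeed, Table~\ref{tab:my-table} has $\eta_\text{LoS}=1$~dB, $\eta_\text{NLoS}=20$~dB) without stating it. So you share the paper's approach but supply the monotonicity ingredient it omits.
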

\begin{proof}
{According to (\ref{LossThreshold}), for a given transmit power $P^t_i(\text{dB})$, the wireless coverage for a ground point only depends on the average pathloss $\Gamma (\text{dB})$ which is experienced in that point. However, the pathloss $\Gamma (\text{dB})$  in (\ref{expectedPL2}) is a function of a DBS's  distance to the ground station which is given by $d = \sqrt{h^2+r^2}$, in which $h$ is the altitude of DBS and $r$ is its the horizontal distance to the user. Hence, for a fixed hovering altitude $h$,  all the ground users at the radial distance $r$ experience the same pathloss. It is equivalent to saying that the locus of the points on the 2D area that experience the same pathloss is a circle centered at the ground image of the DBS. Thus, the coverage region of a DBS is a circular disk.}
\end{proof}

\begin{figure}
    \centering
    \includegraphics[width=9cm]{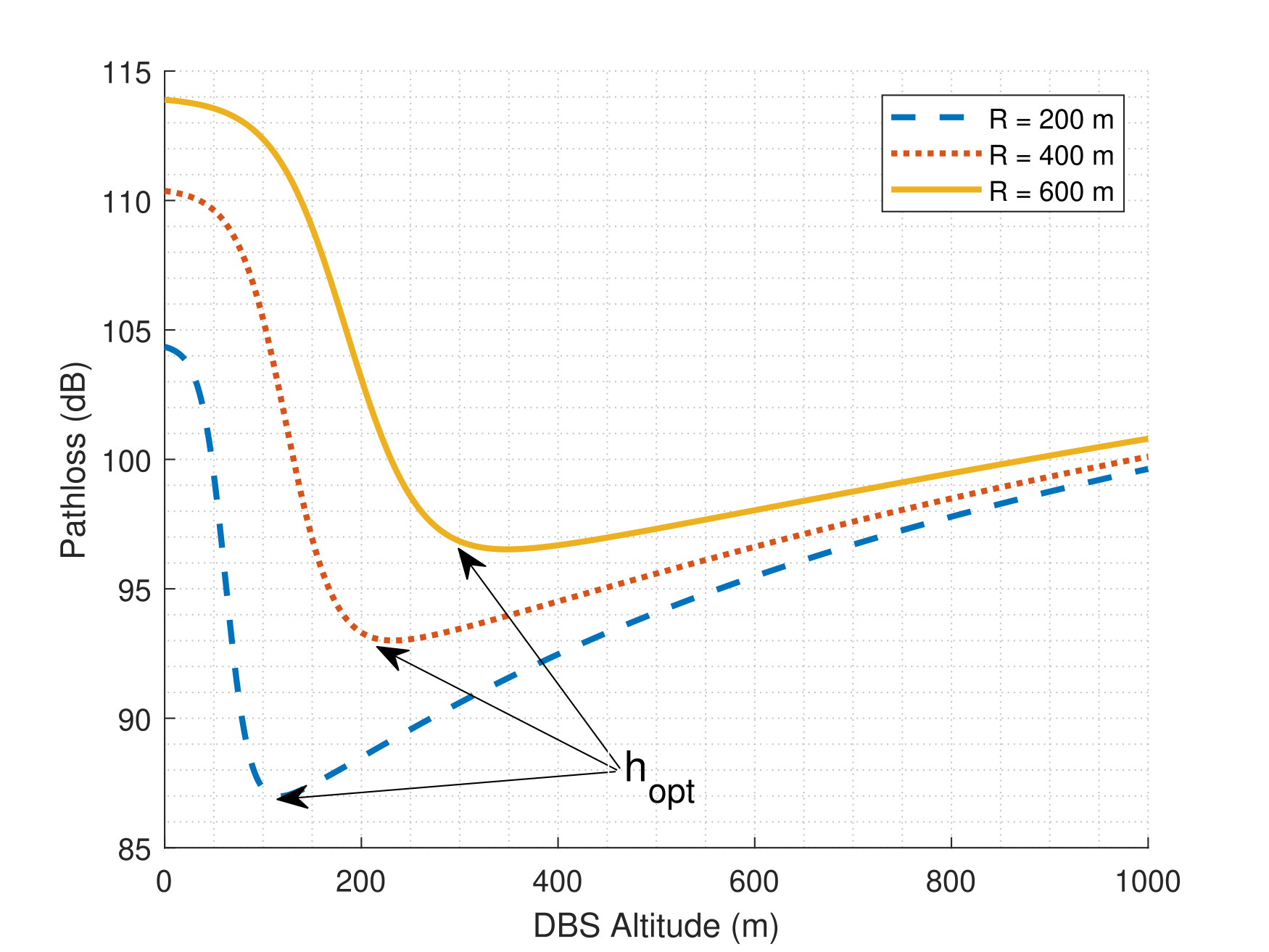}
    \caption{Pathloss as a function of DBS flight altitude for two fixed radial distances on the ground.} \vspace{-.3cm}
    \label{fig:pathloss}
\end{figure}

As shown in (\ref{expectedPL2}), the mean value of pathloss is an mplicit function of flight altitude. This function is shown in Fig.~\ref{fig:pathloss}. As it is seen in this figure, by increasing the altitude of a drone-BS, the pathloss first decreases and then increases. That is because in low altitudes the probability of NLoS is much higher than that of LoS, due to reflections by buildings and other objects, and the additional
loss of a NLoS connection is higher than a LoS connection, but when the altitude increases the LoS probability increases as well and in turn path loss decreases. On the other hand, the pathloss is also dependent on the distance between the transmitter and the receiver, so after a specific height, this factor dominates and as the altitude increases, the pathloss increases
as well. In a nutshell, the DBSs can be considered as a new tier of access nodes in cellular communication systems where the desired coverage area can be attained by changing the transmit power and/or the flight altitude. Moreover, as shown in Fig.~\ref{fig:pathloss}, for a fixed value of radial distance, the pathloss is a unimodal function of altitude. This observation leads us to find the optimal flight altitude of a DBS in the next section.

\subsection{Optimal Flight Altitude for a Single DBS}
The coverage radius of a DSB with transmit power $P^t$ is defined as the radial distance in which the received signal power on a ground receiver reaches the threshold $\epsilon$, i.e.,
\begin{equation}\label{Rdefinition}
    R = r\vert_{\Gamma = P^t - \epsilon},
\end{equation}
in which $R$ is the coverage radius of the DSB. By substituting (\ref{expectedPL2}) into the definition above, we have,
\begin{equation}\label{Requation}
   20\log(d) + \frac{A}{1 + a \exp\Big(-b \big[\arctan(\frac{h}{R})-a \big]\Big)} + B +\epsilon = P^t,
\end{equation}
in which $A = \eta_\text{LoS}(\text{dB}) - \eta_\text{NLoS}(\text{dB})$ and $B =  \eta_\text{NLoS}(\text{dB}) + 20 \log(\frac{4\pi f_c}{c})$. The equation (\ref{Requation})
shows that for any given value of $P^t$, the radius $R$ is an implicit function of $h$. As shown in Fig.~\ref{fig:pathloss}, this function (i.e., $f(h,r)\vert_{P^t=\text{cte}}=0$) is unimodal. As a unimodal function, $f(h,r)\vert_{P^t=\text{cte}}=0$ has only one stationary point which corresponds to the maximum coverage radius. In order to find this stationary point, we take the partial derivative $\frac{\partial r}{\partial h} = 0$, which can be expanded to:
\begin{equation}\label{thetaoptimal}
    \frac{h}{R}+ \frac{9\ln(10)ab A\exp\left(-b[\arctan\big(\frac{h}{R}\big)-a]\right)}{\pi\left[a\exp\left(-b[\arctan\big(\frac{h}{R}\big)-a]\right)+1\right]^2}=0.
\end{equation}

We can find the optimal flight altitude with the corresponding coverage radius by solving the simultaneous equations (\ref{Requation}) and (\ref{thetaoptimal}). There is no closed-form solution to these simultaneous equations and we need to resort to numerical methods to find the optimal values of $h$ and $R$.

Note that due to the practical limitations on DBS altitude, we have $h\leq h_{\text{max}}$, where $h_{\text{max}}$ is the maximum allowable flight altitude in the given environment. As it is shown in Fig.~\ref{fig:pathloss}, for any given coverage radius, the DBS transmit power $\Gamma$ decreases with increasing the altitude of DBS up to some point, i.e. $h_{\text{opt}}$, and then increases. Thus, considering the imposed limitation on the DBS flight altitude, the feasible optimal flight altitude which leads to minimal power consumption for a given coverage radius is equal to $\hat{h}_{\text{opt}} = \min \{h_{\text{opt}}, h_{\text{max}} \}$.

\section{Problem Formulation} \label{sec:ProblemFormulation}
Considering the system model in Fig.~\ref{fig:sys}, we investigate the joint problem of selection and the 3D placement for a heterogeneous set of of DBSs to provide wireless coverage for the ground users. After the locations of all DBSs are determined, each ground user is associated with the DBS that has the highest SINR. We consider the transmission between DBS $i$ and a ground user located at $(x,y)$ coordinates. The achievable rate for the user is given by:
\begin{equation} \label{rategamma}
\gamma_{i}(x, y)=W_{i}\log_{2}\left(1+\frac{P_{i}(x, y)/\Gamma_{i}(x, y)}{N_{0} + \sum_{ j\neq i}^{N} P_{j}(x, y)/\Gamma_{j}(x, y) }\right),
\end{equation}
where $W_{i}$ is the transmission bandwidth of DBS $i$, $P_i(x, y)$ is the DBS transmit power to the user, $\Gamma_{i}(x, y)$ is the average path loss between DBS $i$ and the user, and $N_{0}$ is the noise power. Clearly, the number of users covered by the DBS depends on the distribution of users and the location of the DBS.

The minimum transmit power required to satisfy the rate requirement $\beta$ of ground users is given by:
\begin{equation}\label{minPower}
P_{i,\min}(x,y)=
\left(2^{\beta/W_{i}}-1\right)\Gamma_{i}(x, y)\big(N_{0} + \sum_{ j\neq i}^{N} \frac{P_{j}(x, y)}{\Gamma_{j}(x, y)}\big),
\end{equation}
which is derived using (\ref{rategamma}) and $\gamma(x,y)>\beta$.


To provide the maximum coverage in a geographical region which includes a number of ground users with known locations with the minimum total transmit power, one needs to answer the following questions:
\begin{itemize}
\item How many and which types of the DBSs should be selected?
\item For any subset of the DBSs, what is the optimal placement to achieve minimum aggregate transmit power?
\item In case of overlapping coverage areas, how to address the inter-cell interference?
\end{itemize}
These questions can be formulated as the following optimization problem:
\begin{align}
&\underset{I_i, (x_i, y_i, h_i)}{\text{minimize}}  \quad \sum_{i = 1}^{N}I_i  P^t_i(x_i, y_i), \label{Optimization} \\
&\text{s.t.} \notag\\
& I_i \in \{0,1\},   \label{constraint1} \\
& \gamma_k(x,y)\geq \gamma_0, \label{constraint2} \\
& P_i^t\geq P_{i,\min}, \label{constraint3}\\
& h_i\leq h_i^{\text{opt}}, \label{constraint4}
\end{align}
where $N$ is the total number of available UAVs in the repository. In addition, $I_i$ is an indicator function which equals to $1$ if DBS $D_i \in \mathcal{D}$ is selected for covering the region and equals to $0$ otherwise. It governs the resource allocation strategy for a given area of interest. Moreover, $\gamma_k (x,y)$ is the downlink transmission rate of user $k$ at location $(x,y)$ on the ground which is a function of SINR. Note that a user is assigned to the nearest UAV as is the case for the terrestrial networks.

The first constraint in (\ref{constraint1}) governs the DBS selection scheme. The second constraint in (\ref{constraint2}) ensures the quality of service for the ground users. The third and fourth constraints in (\ref{constraint3}) and (\ref{constraint4}) control the transmit power and flight altitude of the DBSs. In particular, constraint (\ref{constraint4}) highlights the fact that for the flight altitude range $h_i\leq h_i^{\text{opt}}$, the transmit power is an increasing function of the coverage radius, as discussed earlier. This constraint restricts the transmit power to be a monotonic function of coverage radius and allows a tractable solution for the optimization problem in (\ref{Optimization}).

\subsection{Methodology}
Due to its non-convexity, non-linear constraints, and the large number of unknowns, the optimization problem stated in (\ref{Optimization}) is very challenging to solve. We divide the optimization problem in (\ref{Optimization}) into two sub-problems and solve them sequentially to find the best subset of available DBSs along with their corresponding 3D position.

In the first sub-problem, we neglect the constraints (\ref{constraint2}) and (\ref{constraint3}), i.e., the achievable rate performance, and find a subset of DBSs to cover the ground users with minimal transmit power. This problem is similar to the so-called disk covering problem \cite{suzuki1996p}, but with some substantial differences as described in Section \ref{sec:subproblem1}. Nonetheless, we can employ some ideas from the disk covering problem to devise a proprietary solution for the problem in hand.

In the second sub-problem, given the selected DBSs and their corresponding 3D locations, we introduce a low-complexity beamforming method to alleviate the co-channel interference in the overlapping areas between two DBSs where the impairment caused by the interference is most severe. In this problem, our goal is to achieve the required transmission rate at the ground users within the resulting topology from the first sub-problem.

There exists an interplay between these two problems as the solution of the first problem impacts the solution of the second problem and vice-versa. Thus, we propose a recursive algorithm to solve these two interdependent problems. We also provide a time complexity analysis of the proposed algorithm as a measure of its efficiency and scalability.

\section{Selection and 3D placement of the DBSs}\label{sec:subproblem1}
In this section, we investigate the joint problem of resource allocation and optimal placement of a heterogeneous set of DBSs. Consider $K$ ground terminals $\mathcal{T} = \{T_j\}_{j = 1}^{K}$ that are distributed in a two-dimensional area and let $(\tilde{x}_j, \tilde{y}_j)$ be the coordinates of ground terminal $T_j$. Moreover, let $(x_{i}, y_{i}, h_{i})$ denote the three-dimensional location of DBS $D_i$  with transmit power $P_i^t$. We need to solve the following optimization problem,
\begin{align}
&\underset{I_i, (x_{i}, y_{i}, h_{i})}{\min}  \quad \sum_{i = 1}^{N}I_i  P^t_i(x_{i}, y_{i}, h_{i}), \label{Optimization2} \\
&\text{s.t.} \notag\\
& I_i \in \{0,1\},   \label{constraint2-1} \\
& P^r(\tilde{x}_j, \tilde{y}_j)\geq \epsilon, \qquad \forall T_j\in \mathcal{T} \label{constraint2-2} \\
& h_{i} = \min \{h_{i_{\rm opt}}, h_{\rm{max}} \},  \label{constraint2-3}\\
& P_i^t\in[P_{i_{\rm min}},P_{i_{\rm max}}], \label{constraint2-4}
\end{align}
in which the constraint (\ref{constraint2-1}) controls  the  DBS  selection strategy while the constraint (\ref{constraint2-2})  guarantees that all the ground users are covered with \textit{at least} one DBSs. Recall that the received power at ground terminal $T_j$ from DBS $D_i$ is given by $P^r(\tilde{x}_j, \tilde{y}_j) = P^t_i(x_{i}, y_{i}, h_{i}) - \Gamma(d_{ij})$ in which the mean pathloss $\Gamma(d_{ij})$ is a function of the distance between $T_j$ and $D_i$, which is  $d_{ij} = \sqrt{(x_{i}-\tilde{x}_j)^2+(y_{i}-\tilde{x}_j)^2+h_{i}^2}$. Finally, the constraint in (\ref{constraint2-3}) ensures that the DBSs hover at their optimal altitude and the constraint in (\ref{constraint2-4}) limits the transmit power of the DBSs.

It is worth noting that restricting the DBS flight altitude to $h\in (0, h_{\text{opt}}]$, causes the coverage radius $R$ to be an increasing function of transmit power $P^t$. Therefore, given the fact that the flight altitude of DBSs is restricted to the range $(0, h_{\text{opt}}]$ by the constraint (\ref{constraint2-3}), we can minimize the DBSs' coverage radii instead of their transmit power. We will do so by minimizing the coverage radii of DBSs one at a time, starting from the largest coverage radius. Consider the following optimization problem which aims at minimizing the largest coverage radius of the deployed DBSs:
\begin{align} \label{K-center}
    & \underset{(x_{i}, y_{i})\vert_{i=1}^{N}}{\min}\quad
    \underset{(\tilde{x}_j, \tilde{y}_j)\vert_{j=1}^{K}}{\max}\quad
    \underset{(x_{i}, y_{i})\vert_{i=1}^{N}}{\min}I_iD_{ij}
    \\
    &\text{s.t.} \notag\\
    & I_i \in \{0,1\}   \label{K-center-constraint}
\end{align}
in which $D_{ij} = \sqrt{(x_{i}-\tilde{x}_j)^2+(y_{i}-\tilde{x}_j)^2}$ is the radial distance between ground terminal $T_j$ and the image of DBS $D_i$ on the two-dimensional Cartesian plane and the constraint (\ref{K-center-constraint}) controls the subset selection of the available DBSs.

The optimization problem in (\ref{K-center}) aims at arranging the $M\le N$ coverage disks in the two-dimensional plane such that: 1) all the ground terminals are covered by at least one coverage disk; and, 2) the radius of the largest coverage circle is minimized. Let us assume that $K_1<K$ ground users are covered by the largest disk. Once the solution to (\ref{K-center}) is found, we can remove the largest coverage disk and all its encircled ground terminals and solve the smaller version of the same problem with $M-1$ coverage disks and $K-K_1$ ground terminals. This recursive algorithm continues until there remains only one disk whose optimal placement is determined for covering the remaining ground terminals with minimum radius. Next, we propose an efficient algorithm to solve the optimization problem in (\ref{K-center}).

\subsection{Proposed Algorithm}
The problem in (\ref{K-center}) has an analogy with the so called \textit{planar K-center problem} \cite{suzuki1996p}. In the \textit{K}-center problem, for a given set of points on a two-dimensional surface, the task is to arrange a given number of congruent disks, say $M$ disks, centered at $M$ points from the set, such that all the points on the surface are covered. The goal is to minimize the radius of these congruent disks.  The problem is known to be NP-hard \cite{plastria2002continuous}, and hence, there does not exist a polynomial time algorithm to solve it optimally. There are many studies in the literature to tackle the different variants of the \textit{K}-center problem, most of which are heavily influenced by the constraint that disk centers have to be selected from the set of points (e.g., the users' location in our problem) \cite{banhelyi2015optimal}. Nevertheless, the optimization problem in (\ref{K-center}) has some distinctive differences with the planar \textit{K}-center problem which mandates devising a solution tailored to the specific properties  of the problem in hand. First, unlike the original \textit{K}-center problem, the center of the coverage disks can be chosen anywhere in the surface, and not restricted to some predetermined points. Second, different from the \textit{K}-center problem, the number of coverage circles are not known a priori. Finally, as the coverage radii of the coverage disks are mere translation of their corresponding DBS transmit power, there exists a limitation on the maximum and minimum coverage radii of the coverage disks. In other words, not all the solution to (\ref{K-center}) are feasible and we need to find a feasible optimal solution.

First let us define some concepts that we will need later. Let $\Tilde{\mathcal{T}}$ be a subset of $\mathcal{T}$ and let $F(\Tilde{\mathcal{T}})$ be the optimal value of the objective function (\ref{K-center}) with a single coverage disk that covers all the ground terminals in $\Tilde{\mathcal{T}}$. This problem is known as the planar 1-center problem which is modeled as a linear programming problem and can be solved in $\mathcal{O}(n)$ time \cite{megiddo1983linear}. $F(\Tilde{\mathcal{T}})$ is given by:
\begin{equation}\label{1-center}
    F(\Tilde{\mathcal{T}}) = \underset{(x, y)}{\min}\quad
    \underset{(\tilde{x}_j, \tilde{y}_j)\in \Tilde{\mathcal{T}}}{\max} \sqrt{(x-\tilde{x}_j)^2+(y-\tilde{y}_j)^2}
\end{equation}
In fact, $F(\Tilde{\mathcal{T}})$ is the radius of the smallest disk that covers all the points in $\Tilde{\mathcal{T}}$. Also, let $X^*(\Tilde{\mathcal{T}}) = (x^*,y^*)$ be the optimal point of optimization problem (\ref{1-center}). It is shown in \cite{farahani2009facility} that $X^*(\Tilde{\mathcal{T}})$ is unique for any set of points $\Tilde{\mathcal{T}}$.

Next, let $\varphi = \{ \Tilde{\mathcal{T}}_1, \Tilde{\mathcal{T}}_2,\dots, \Tilde{\mathcal{T}}_M\}$ be a partition of $\mathcal{T}$ such that $\Tilde{\mathcal{T}}_i \bigcap \Tilde{\mathcal{T}}_j = \emptyset$ and $\bigcup_{i=1}^{M}\Tilde{\mathcal{T}}_i = \mathcal{T}$. Also, let $F_\varphi$ be the optimal value of the objective function for partition $\varphi$ which is given by,
\begin{equation}
    F_\varphi =  \left\{ F(\Tilde{\mathcal{T}}_1), F(\Tilde{\mathcal{T}}_2),\dots, F(\Tilde{\mathcal{T}}_M)\right\},
\end{equation}
which is the set of radii of the smallest disks to cover all the points of $\mathcal{T}$ according to partition $\varphi$.


We select $M$ starting points $\{X^{[0]}_{1}, X^{[0]}_{2}, \dots, X^{[0]}_{M}\}$ as the initial centers of the coverage disks where $X^{[0]}_{i} = (x^{[0]}_{i}, y^{[0]}_{i})$. Let $k$ denote the iteration number and suppose that $\{X^{[k]}_{1}, X^{[k]}_{2}, \dots, X^{[k]}_{M}\}$ are found by solving the 1-center problem \cite{megiddo1983linear} for the corresponding subsets in partition $\varphi^{[k]}$. A set of ground terminals is assigned to each center such that each ground terminal is assigned to the closest center. Consequently, the $M$ centers in iteration $k$ define the following partitioning:
\begin{multline} \label{NewPartitions}
 \Tilde{\mathcal{T}}_i^{[k]} = \Big\{ T_p\big\vert \quad \sqrt{(x^{[k]}_{i}-x_{T_p})^2+(y^{[k]}_{i}-y_{T_p})^2} < \\
 \sqrt{(x^{[k]}_{j}-x_{T_p})^2+(y^{[k]}_{j}-y_{T_p})^2}, \quad j = 1,2,\dots, M \Big\}.
\end{multline}

Next, we update the centers of the disks according to the new partitions in (\ref{NewPartitions}). The new center for each set is the solution to the 1-center problem defined by the following set,
\begin{equation} \label{NewCenters}
  X^{[k+1]}_{i} =
    \begin{cases}
      X^*(\Tilde{\mathcal{T}}^{[k]}_{i}) & \text{if $\Tilde{\mathcal{T}}^{[k]}_{i}\neq\emptyset$}\\
      X^{[k]}_{i} & \text{if $\Tilde{\mathcal{T}}^{[k]}_{i}=\emptyset$}\\
    \end{cases}.
\end{equation}
The algorithm runs until the partitioning does not change, i.e., $\varphi^{[k]} = \varphi^{[k+1]}$, for some $k$. Upon convergence, the proposed algorithm partitions the ground terminal into $M$ subsets and for each subset it yields the optimal location of the coverage disk with minimum radius that covers all the ground terminal in that subset.

Given the coverage radii of the DBSs, their corresponding transmit power and flight altitude are found by solving (\ref{Requation}) and (\ref{thetaoptimal}). However, a solution to (\ref{Optimization2}) is feasible only if the calculated transmit power of all selected DBSs fall within the allowed range according to the constraint~(\ref{constraint2-4}). Let $\mathcal{S}$ be the set of all feasible solution defined. Upon finding all the feasible solutions, we select the one with the minimum aggregate transmit power as the optimal solution. Algorithm~\ref{Alg:3Dplacement} shows the pseudocode of the proposed algorithm.

It is worth noting that some ground terminals may be covered by more than a single coverage disk, which results in experiencing strong co-channel interference. Next, we present a bargaining game formulation to model and alleviate the co-channel interference for overlapping DBSs.

\begin{algorithm}[t]
\SetAlgoLined
\SetKwInOut{initialization}{Initialization}
\KwData{$\mathcal{D} = \{D_i\}_{i = 1}^{N}$, $\mathcal{T} = \{t_j\}_{j = 1}^{K}$, $h^{\text{max}}$}
\KwResult{\{$I_i, (x_{D_i}, y_{D_i}, h_{D_i})\}_{i = 1}^{N} $ }
\initialization{$I_i \leftarrow 0$ for $i = 1,2,\dots,N$ \newline $\mathcal{S}$ $\leftarrow \emptyset$}

\For{$M \leftarrow 1$ \KwTo $N$}{
crt $\leftarrow 0$\\ \tcc{Loop counter}
Initialize $M$ centers $\{X^{[0]}_{1}, X^{[0]}_{2}, \dots, X^{[0]}_{M}\}$\\
\Repeat{$\varphi[crt] = \varphi[crt+1]$}{
calculate the partition $\varphi[crt]$ using (\ref{NewPartitions})\\
crt $\leftarrow$ crt + 1\\ 
update the centers using (\ref{NewCenters})\\
calculate the partition $\varphi[crt+1]$ using (\ref{NewPartitions})\\
}

\For{$j \leftarrow 1$ \KwTo $M$}{
$R_j \leftarrow F(\Tilde{\mathcal{T}}_j)$\\
calculate the corresponding $P^t_j$ using (\ref{Requation}),  (\ref{thetaoptimal})\\
calculate the corresponding $h_{\text{opt},j}$ using (\ref{Requation}),  (\ref{thetaoptimal})\\
\For{$i \leftarrow 1$ \KwTo $N$}{
\If{$P^t_j \in [P_i^{\text{min}}, P_i^{\text{max}}]$}{
$I_i \leftarrow 1$\\
$R_i \leftarrow R_j$\\
$h_i \leftarrow \min \{h_{\text{opt},j}, h^{\text{max}} \}$\\
$\mathcal{D} \leftarrow \mathcal{D}\setminus D_i$

}
}
}
\If{$\sum_{i=1}^{N}I_i = M$}{
$\mathcal{S}\leftarrow \mathcal{S} \bigcup \big\{(x_i,y_i,h_i,R_i,P^t_i)\big\vert I_i = 1\big\}$
}
}
\KwRet{ $\mathcal{S}$}
\caption{3D Placement of DBSs}
\label{Alg:3Dplacement}
\end{algorithm}

\section{Downlink Beamforming for Interference Management}\label{sec:subproblem2}
Enabling airborne adhoc systems to efficiently operate in the same spectral band is a key challenge for the drone small cells. Similar to the terrestrial wireless cells, the inter-cell interference caused by communication in an interference channel degrades the quality of received signals on the ground stations. There are many algorithms and solutions to alleviate the impact of destructive co-channel interference for the conventional terrestrial networks \cite{mhiri2013survey}. However, the difficulty in the airborne small cell networks stems from the fact that the DBSs are very battery-limited and thus, implementing the conventional interference management methods for DBSs is not a viable option due to overwhelming computational complexity. In this section, we adopt the framework of \textit{bargaining game theory} \cite{peters2013axiomatic} to introduce a simple and low-complexity beamforming method to address the inter-cell interference.

We consider the scenario whereby $M$ interfering DBSs are trying to transmit their information in the downlink to $M$ ground users located in the overlapping region of their corresponding coverage disks. Assuming that each DBS performs single-stream transmission, and given that all channels are frequency flat, we have the following complex baseband symbols $y_m$ received by the ground users $T_m$:
\begin{equation}\label{Y1}
    y_{m}=  {h}_{mm}^{T}  {w}_{m} s_{m} + \sum _{l=1,l\ne m}^{M}  {h}_{lm}^{T} {w}_{l} s_{l} + e_{m},
\end{equation}
where $s_m, 1<m<M$ is the transmitted symbols from DBS $D_m$, $h_{lm}$ represents the $K \times 1$ channel vector between DBS $D_l$ and user $T_m$, $w_m$ is the $k \times 1$ beamforming vector used by DBS $D_m$, and  $n_m$ is the zero mean additive Gaussian noise with variance $\sigma^2$. The maximum transmit power per DBS is normalized to $1$ which yields the following power constraint on each DBS $D_m$: $\parallel w_m \parallel^2 \leq 1, \forall m \in\{1, 2, \dots, M\}$.

Each DBS $D_m$ seeks to optimize its weight vector $w_m$ in order to maximize the quality of service received by its corresponding ground user. However, there exists an interplay between the strategies (i.e., optimizing the weight vectors) of the DBSs as any selected value of $w_m$ impacts the choice of $w_l, l\ne m$, and vice versa. Thus, the key question is whether we can implement some sort of cooperation between the interfering DBSs to improve their performance? To answer this question, we first assess the non-cooperative scenario in which the DBSs act selfishly with no exchange of information. We formulate a non-cooperative \textit{zero-sum game} between the interfering DBSs for which the \textit{Nash equilibrium} is the accepted outcome \cite{han2012game}.
\subsection{Nash equilibrium}
Consider the non-cooperative downlonk beamforming game $G$ as the triplet $G = \big\{\mathcal{M}, ({S}_m)\vert_{m \in \mathcal{M}}, (u_m)\vert_{m \in \mathcal{M}} \big\}$ where:
\begin{itemize}
\item $\mathcal{M}$ is the set of players, i.e., the interfering DBSs;
\item ${S}_m$ is the strategy of DBS $D_m$ which is its choice of weight vector $w_m$ such that $\parallel w_m \parallel^2 \leq 1$;
\item $S_{-m}$ is the vector of strategies of all DBSs except $D_m$;
\begin{equation*}
    S_{-m} = [S_1, \dots, S_{m-1}, S_{m+1}, \dots, S_M];
\end{equation*}
\item $u_m: [S_m, S_{-m}] \rightarrow \mathbb{R}$ is the utility of each DBS $D_m$ which is the rate it achieves at its correponding ground user.
\end{itemize}

For a given tuple of beamforming vectors $(w_1,w_2, \dots, w_M)$, the received rate at the ground users is given by:
\begin{equation}\label{R1}
    R_m = \log_2 \left( 1 + \frac{\vert w_m^Th_{mm} \vert^2}{\sigma^2 + \sum _{l=1,l\ne m}^{M}
    \vert w_l^Th_{lm} \vert^2} \right).
\end{equation}
We define the utilities of the DBSs as
\begin{equation}\label{utility}
    u_m (S_m, S_{-m}) = R_m (w_1,w_2, \dots, w_M).
\end{equation}

As the utilities depend on the strategies of the competing players, we have a noncooperative game among the DBSs. In the absence of coordination among the DBSs, the outcome of the game will generally be the Nash equilibrium. A vector of strategies $(S_1^{\rm NE}, S_2^{\rm NE}, \dots, S_M^{\rm NE})$ is the Nash equilibrium if it satisfies the following condition:
\begin{equation}\label{Nashcondition1}
    u_m(S_m^{\rm NE}, S_{-m}^{\rm NE})\geq u_m(S_m, S_{-m}^{\rm NE}), \quad 1\leq m \leq M,
\end{equation}
which means that no DBS can unilaterally deviate from its optimal Nash equilibrium strategy without decreasing its own utility. By substituting (\ref{R1}) and (\ref{utility}) in (\ref{Nashcondition1}) and by performing some algebraic manipulations, we can find the unique equilibrium strategies as,
\begin{equation}\label{NashW1}
    w_m^{\rm NE} = \frac{h_{mm}^\ast}{\vert \vert h_{mm} \vert \vert}, \quad 1\leq m \leq M,
\end{equation}
where $h_{ij}^\ast$ is the complex conjugate of $h_{ij}$. The equilibrium strategies in (\ref{NashW1}) correspond to the maximum-ratio transmission beamforming. This conclusion is resulted from the fact that when DBS $D_m$ uses the beamforming vector $w_m^{\rm NE}$
at the Nash equilibrium, there exists no other vector that can yield a larger rate while satisfying the power constraint $\parallel w_m \parallel^2 \leq 1$.

\subsection{Bargaining Solution}
The Nash strategies in (\ref{NashW1}) represent the natural outcome of the considered scenario and it does not necessarily amount to the optimal beamforming strategies for the DBSs. In fact, the inefficiency of the Nash equilibrium solution in (\ref{NashW1}) is due to the uncoordinated actions of the DBSs. Our goal is to improve on the Nash strategies by allowing some level of cooperation among the DBSs. Note that the DBSs are selfish and try to maximize their own individual rate. Therefore, any sort of cooperation is feasible only if leads to better utilities, i.e., downlink transmission rate, for all the involved parties. We show that by small exchange of information between the interfering DBSs and without any need for a centralized controller, they can coordinate their strategies such at all the involved DBSs benefit from the cooperation.

We define a bargaining game between the interfering DBSs in which the DBSs need to find a point in the achievable rate region which yields better individual transmission rate for all the interfering DBSs compared to the non-cooperative scenario, i.e., the Nash equilibrium rates. Once they agree on a point, the beamforming are optimized accordingly. Let's define the achievable rate region $\overline{\mathcal{R}}$ as
\begin{equation}\label{achievablerate}
   \overline{\mathcal{R}} =  \underset{w_i,1\leq i \leq M, \parallel w_i \parallel \leq 1}{\mbox{\LARGE$\cup$}}(R_1, R_2,\dots, R_M),
\end{equation}
which is a compact set since the set $\{w_i\}_{i=1}^M$ subject to power constraint $\parallel w_i \parallel \leq 1$ is compact and the mapping from $\{w_i\}_{i=1}^M$ to $\{R_i\}_{i=1}^M$ is continuous.

The bargaining game between $M$ DBSs with interfering regions is composed of the following elements:
\begin{itemize}
    \item The compact and convex set $\mathcal{R}$ of all possible utilities (i.e., transmission rates) of the players (i.e., the DBSs). Note that the achievable rate region $\overline{\mathcal{R}}$ defined in (\ref{achievablerate}) is not necessarily a convex set, thus, we consider the convex hull $\mathcal{R}$ of the points in $\overline{\mathcal{R}}$ as the  bargaining utility region $\mathcal{R}$. Formally, $\mathcal{R} = \textbf{Conv}(\overline{\mathcal{R}}) $.
    \item The \textit{disagreement point} $d$ which is the outcome of the bargaining game if the players fail to reach an agreement. Let the disagreement point be the Nash equilibrium rates that the players achieve in absence of cooperation:
\begin{equation} \label{diagreementPoint}
\begin{split}
d & = \big( R_1, R_2,\dots, R_M \big)\vert_{w_i = w_i^{\rm NE}, 1\leq i \leq M} \\
 & = \big( R_1^{\text{NE}}, R_2^{\text{NE}},\dots, R_M^{\rm NE} \big).
\end{split}
\end{equation}
\end{itemize}

\begin{definition}
The bargaining solution is a function $f(.)$ that specifies
a unique outcome $f(\mathcal{R}, d) \in \mathcal{R}$ for every bargaining problem $(\mathcal{R}, d)$. Let $f_i(\mathcal{S}, d)$ represent the component of player $i$ in the bargaining outcome.
\end{definition}

\begin{definition} A bargaining solution $f(\mathcal{R}, d)$ is Pareto efficient if there does not exist a point $(R_1,R_2, \dots, R_M)\in \mathcal{M}$ such that $R>f(\mathcal{R},d)$ and $R_i>f_i(\mathcal{R},d)$ for some $i$. Any reasonable bargaining scheme must choose a Pareto efficient outcome since, otherwise, there would exist another outcome which is better off for all the players.
\end{definition}

There exist several axiomatic definitions for the solution of bargaining game, such as the Nash bargaining solution (NBS) \cite{han2012game} and the Kalai–Smorodinsky bargaining solution (KSBS) \cite{kalai1975other}. In this work, we use the KSBS solution concept since it results in individual fairness which will be explained in the remainder of this section. In the KSBS formulation, given the beamforming bargaining problem $(\mathcal{R}, d)$, the rate for the DBSs satisfies the following equation,
\begin{equation}\label{KSformula}
   \frac{R_{1}-R_{1}^{\text{NE}}}{R_{1}^{\text{max}}-R_{1}^{\text{NE}}} = \cdots = \frac{R_{M}-R_{M}^{\text{NE}}}{R_{M}^{\text{max}}-R_{M}^{\text{NE}}},
\end{equation}
where $R_{m}^{\text{NE}}$ is the rate of player $m$ at the disagreement point (i.e, Nash equilibrium), and $R_{m}^{\text{max}}$ denotes the maximum possible rate for player $m$. For our problem, achieving $R_{m}^{\text{max}}$ corresponds to allowing user $m$ to occupy all available resources, i.e., all the bandwidth in interference channel, and thus it is easy to determine. Thus, in the Kalai–Smorodinsky bargaining solution (\ref{KSformula}), every player gets the same fraction of its maximum possible rate which makes it an attractive approach in situations where one wishes to balance individual fairness with overall system performance. The optimal values of transmission rate according to KSBS can be found by solving the following optimization problem
\begin{align}\label{KS2}
&\underset{(w_1, w_2, \dots, w_M)}{\text{maximize}} \qquad r \\
&\text{s.t.} \quad r = \frac{R_{m}}{R_{m}^{\text{max}}}, \quad i = 1,2,\dots, M.
\end{align}


The optimization problem in (\ref{KS2}) cannot be solved by the convex optimization techniques due to the fact that the additional equality constraints in (36) are not affine with respect to $r$ and $(w_1, w_2, \dots, w_M)$. However, inspecting (\ref{KSformula}) and (\ref{KS2}) reveals that the KSBS corresponds to the intersection of the rate region boundary and the line segment from the origin to the point $(R_{1}^{\text{max}},\dots,R_{M}^{\text{max}})$. Thus, the optimization problem (\ref{KS2}) can effectively be solved by employing the \textit{bisection} method \cite{boyd2004convex}. The goal of the bisection method is to efficiently search along this line segment until the point of intersection is found.

Since, $0\leq r \leq 1$, we set $B_l = 0$ and $B_u = 1$ as the lower bound and upper bound of $r$. At each iteration, we bisect the the interval between $B_l$ and $B_u$ by the midpoint $\frac{B_l+B_u}{2}$. Then, we check the feasibility test
\begin{equation}\label{feasibility}
    \frac{R_{m}}{R_{m}^{\text{max}}}>r', \quad m = 1,\cdots, M,
\end{equation}
to determine if the current bisection point $r' = \frac{R_{m}}{R_{m}^{\text{max}}}$ corresponds to an achievable rate pair for some beamforming vector. It is worth noting that due to the fact that $R_m$ is strictly concave with respect to $w_m$, the inequality constraints in (\ref{feasibility}) are strictly convex and the test can be performed by standard numerical methods. Once the feasibility is checked, we update the search interval as follows. If the point is feasible, $B_l$ is updated by the current midpoint $\frac{B_l+B_u}{2}$, otherwise, the $B_u$ is replaced with $\frac{B_l+B_u}{2}$. This process is repeated until the difference between the lower bound and the upper bound is within the error tolerance $\delta$, which requires at most $\log_2(\frac{1}{\delta})$ iterations. The pseudocode for the bisection method is shown in Table \ref{Alg:Bisection}.

\begin{algorithm}[t]
\SetAlgoLined
\SetKwInOut{initialization}{Initialization}
\KwData{Disagreement point: $d = \big( R_1^{\text{NE}},\dots, R_M^{\text{NE}} \big)$\newline
Maximum achievable rates: $\big( R_1^{\text{max}},\dots, R_M^{\text{NE}} \big)$\newline
Error threshold: $\delta$}
\KwResult{Optimal rate vector:  $\big( R_1^*,\dots, R_M^* \big)$}
\initialization{$B_l \leftarrow 0$ and $B_u \leftarrow 1$}

\Repeat{$B_u-B_l\leq \delta$}{
$r \leftarrow \frac{B_u+B_l}{2}$\\
$R_1 \leftarrow rR_1^{\text{max}}$\\
Given $R_1$, compute $R_2,\dots,R_M$ by (\ref{KSformula})\\
\eIf{$(R_1, \dots, R_M)$ is feasible according to (\ref{feasibility})}{$B_l \leftarrow r$}{
$B_u \leftarrow r$}
}
$R_1^* \leftarrow rR_1^{\text{max}}$\\
\For{$m\leftarrow 2$ \KwTo $M$}{

$R_m^* \leftarrow R_{m}^{\text{NE}} + \big(R_{m}^{\text{max}}-R_{m}^{\text{NE}}\big)\frac{R_{1}-R_{1}^{\text{NE}}}{R_{1}^{\text{max}}-R_{1}^{\text{NE}}}$
}
\KwRet{ $\big( R_1^*,\dots, R_M^* \big)$}
\caption{Bisection Algorithm for KSBS}
\label{Alg:Bisection}
\end{algorithm}

\section{Simulation Results}\label{sec:simulations}
For simulations, we consider the drone-based communications over $2$ GHz carrier frequency, i.e., $f_c$ = $2$ GHz, in an urban environment with parameters $a= 9.61$, $b = 0.16$~\cite{al2014modeling}. We assume that the minimum allowable received signal power for a successful transmission is $\epsilon = -60$ dBm. We also consider a repository of $12$ DBSs in which there are three different types of DBSs with maximum transmit power  of $35$ dBm, $39$ dBm, and $43$ dBm, and there are four identical DBSs of each kind. The goal is to provide wireless coverage for the ground users that are distributed in a $10$ Km $\times$ $10$ Km area. The simulation parameters are summarized in Table \ref{tab:my-table}.

\begin{table}[t]
\centering
\caption{Simulation parameters.}
\label{tab:my-table}
\begin{tabular}{|c|c|c|c|}
\hline
\textbf{Parameter}               &\textbf{ Value}         & \textbf{Parameter}         & \textbf{Value}          \\ \hline
$a$                     & 9.61          & $f_c$             & 2 GHz          \\ \hline
$b$                     & 0.16          & $\epsilon$        & -60 dBm        \\ \hline
$\eta_\text{LoS}$       & 1 dB          & $L_x$             & 10 Km          \\ \hline
$\eta_\text{NLoS}$      & 20 dB         & $L_y$             & 10 Km          \\ \hline
$K$                     & 10            & $\beta$           & 1 Mbps         \\ \hline
$\sigma_x$              & 20            & N                 & 12             \\ \hline
$\sigma_y$              & 20            & $h^{\text{max}}$  & 3 Km           \\ \hline
$\xi$                   & 0.05          & $P^t_\text{max}$  & $\{35, 39, 43\}$ dB \\ \hline
\end{tabular}
\end{table}

The ground users are distributed randomly in the area. We consider the uniform and truncated Gaussian distributions for users location. Assuming that the x-coordinate and the y-coordinate are independent random variables, these distributions for a rectangular area with size of $L_x\times L_y$ are respectively given by \cite{papoulis2002probability}:
\begin{align}
f^{\text{U}}(x, y) &=\frac{1}{L_{x}L_{y}},\\
f^{\text{tG}}(x, y) &=\frac{1}{G}\exp\left(\frac{L_{x}-\mu_{x}}{\sqrt{2\sigma_{x}}}\right)^{2}\exp\left(\frac{L_{y}-\mu_{y}}{\sqrt{2\sigma_{y}}}\right)^{2},&\tag{21} \end{align}
in which $G = 2\pi \sigma_x \sigma_y \text{erf}\big(\frac{L_x-\mu_x}{\sqrt{2\sigma_x}}\big)\text{erf}\big(\frac{L_y-\mu_y}{\sqrt{2\sigma_y}}\big)$ is the normalizing constant and $\text{erf}(z) = \frac{2}{\sqrt{\pi}}\int_0^z e^{-t^2}dt$ is the Gauss error function. Moreover, $\mu_x$, $\sigma_x$, $\mu_y$, $\sigma_y$ are the mean value and standard deviation in the $x$ and $y$ directions. The truncated Gaussian distribution is used for modeling a hotspot area in which the ground users are concentrated around the hotspot center $(\mu_x, \mu_y)$ and their density decreases as they get further further away from the center \cite{mozaffari2016optimal}.

\begin{figure}
    \centering
    \begin{subfigure}[b]{0.5\textwidth}
        \includegraphics[width=\textwidth]{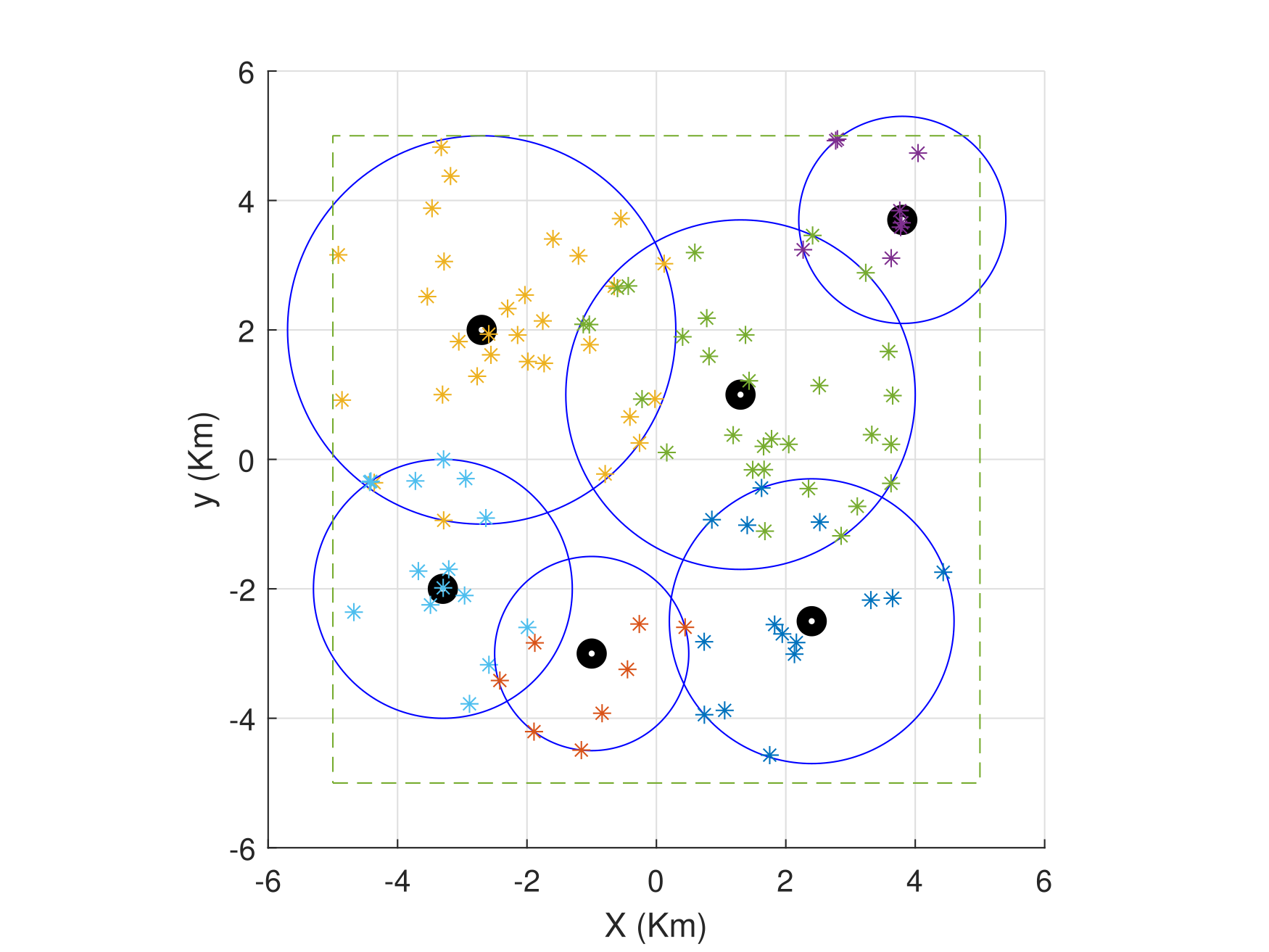}
        \caption{}
        \label{fig:2D}
    \end{subfigure}
    ~ 
    \begin{subfigure}[b]{0.5\textwidth}
        \includegraphics[width=1\textwidth]{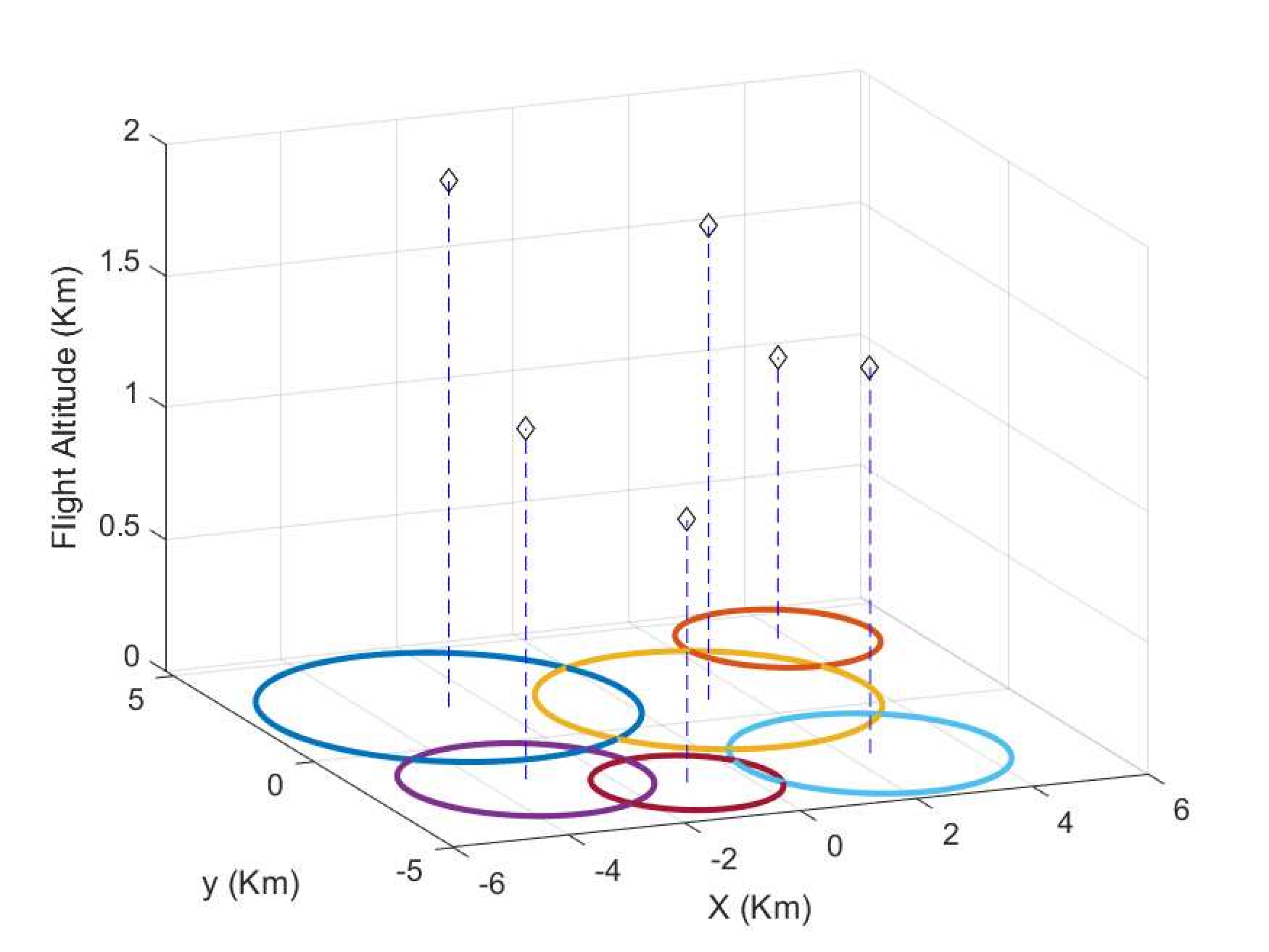}
        \caption{}
        \label{fig:3D}
    \end{subfigure}
    \caption{An illustrative snapshot of the optimal placement of the DBSs and the user-DBS association: (a) User distribution and the 2D projection of the DBSs, and (b) the optimal 3D location of the DBSs and their corresponding coverage disks. The user are illustrated by red dots and uniformly distributed in a 10 km $\times$ 10 km area.}\label{fig:snapshot}
\end{figure}

Fig.~\ref{fig:snapshot} illustrates the optimal resource allocation and 3D placement of the DBSs as well as the user-DBS association for snapshot of the ground users' topology. In particular, Fig.~\ref{fig:2D} shows the 2D projection of the DBSs and their corresponding coverage disks. It can be seen that for each coverage disk, there exists at least two ground users on its boundary. Consequently, one cannot shrink any of these coverage disks without leaving some ground user out of the coverage area. In other words, the DBSs' coverage radii are minimized while providing the required service to the ground users. Fig.~\ref{fig:3D} shows the 3D location of the DBSs. Given the coverage radius of each DBS, its flight altitude has been optimized to minimize the required transmit power. Moreover, it can be seen that only $6$ out of the $12$ DBSs are deployed in this particular illustrative example. Deploying more DBSs will unavoidably decrease the power efficiency by increasing the inter-cell interference.

\begin{figure}
    \centering
    \includegraphics[width=9cm]{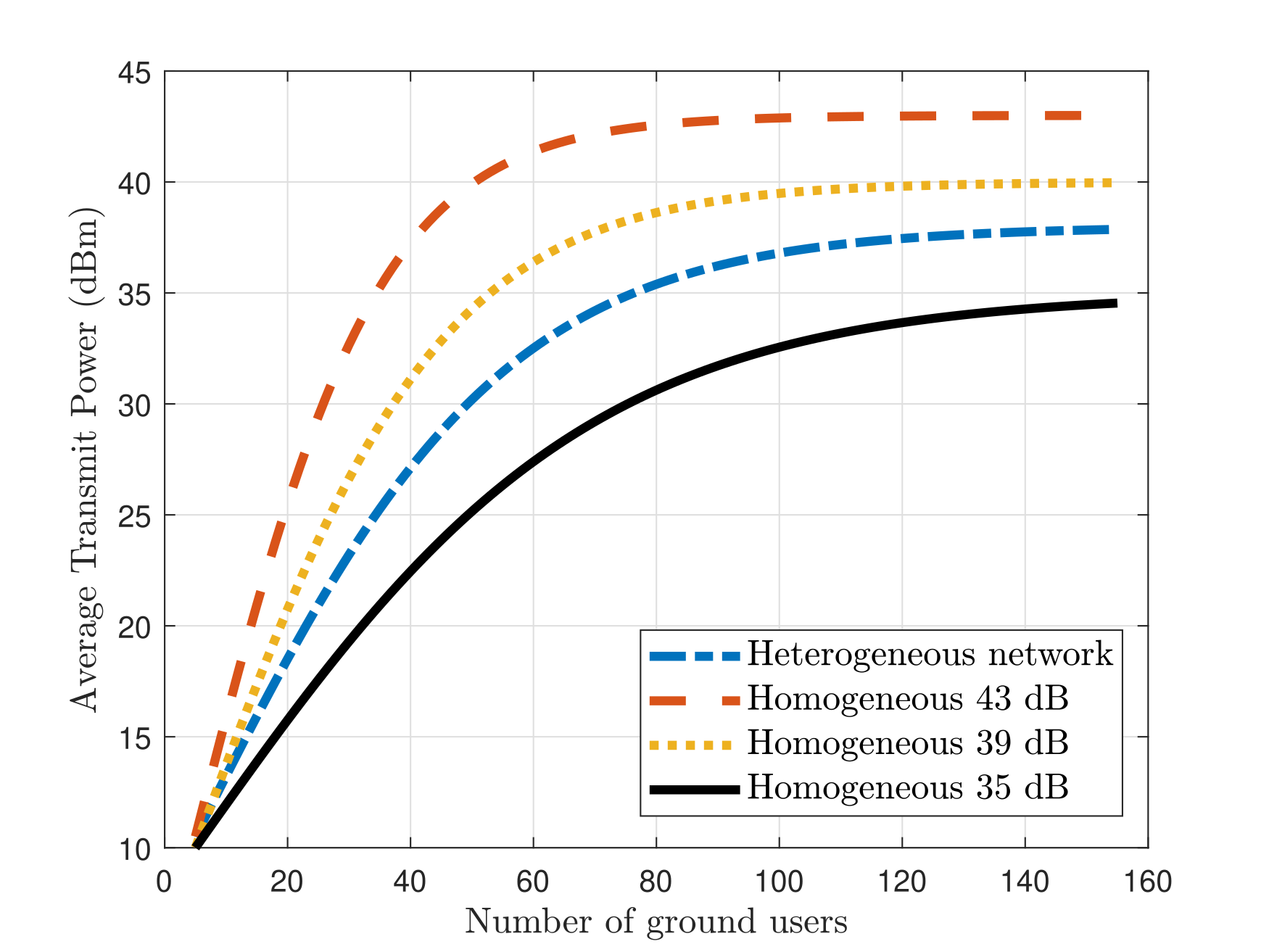}
    \caption{Total average power consumption versus the number of ground users for uniform distribution.}
    \label{fig:powerVSnumber}
\end{figure}

\begin{figure}
    \centering
    \includegraphics[width=9cm]{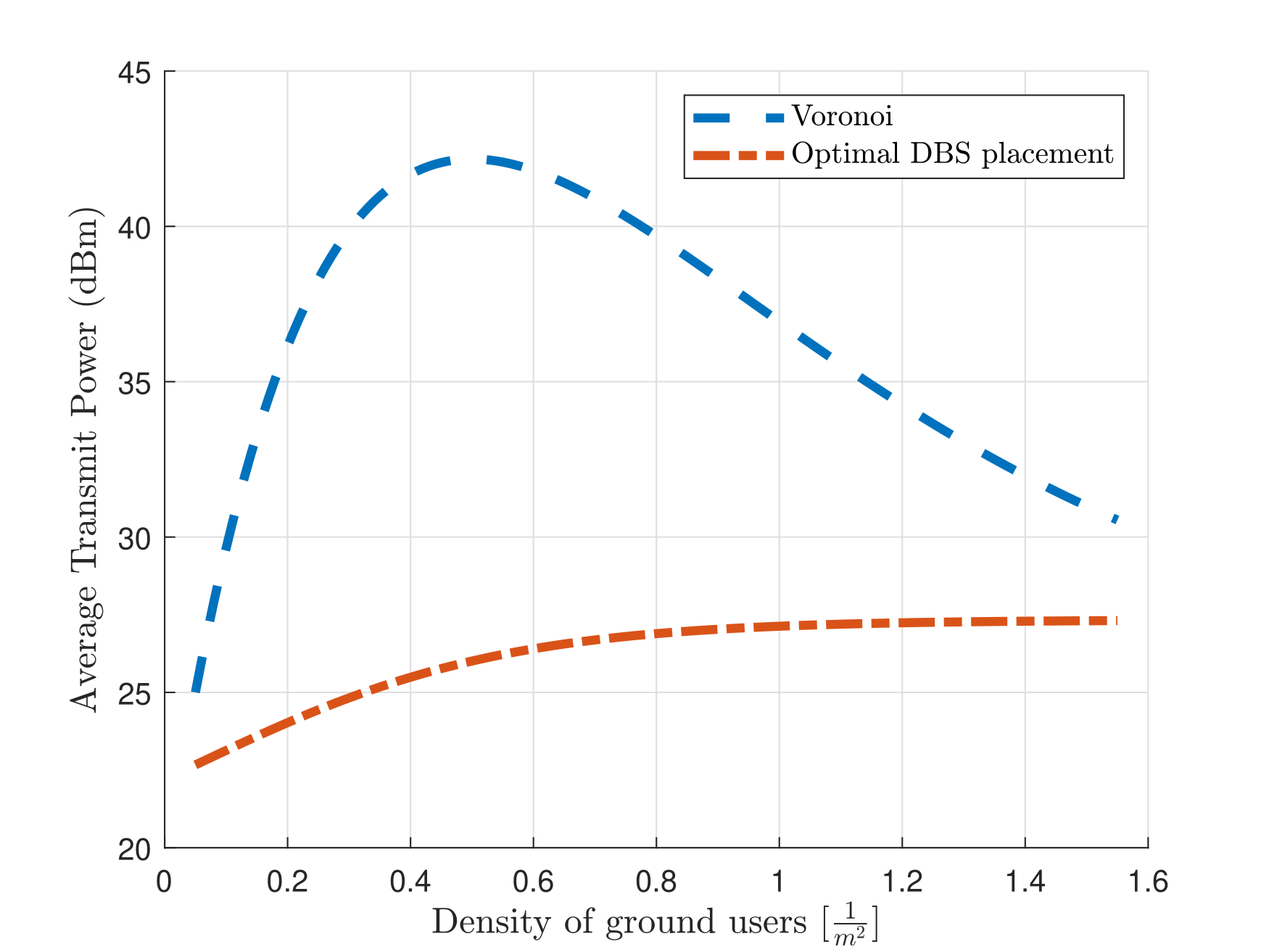}
    \caption{Total average power consumption versus the user density for the truncated Gaussian distribution.}
    \label{fig:powerVSdensity}
\end{figure}

Fig.~\ref{fig:powerVSnumber} shows the average transmit power of the DBSs as a function of the number of ground users which are uniformly distributed in  a $10$ Km $\times$ $10$ Km area. For a better comparison, two scenarios are considered, namely, the heterogeneous network consisting of all $3$ different DBSs and the homogeneous networks consisting of a single type DBS.  In this figure, we can see that as the number of users increases, the average required transmit power of DBSs increases as well. However, by increasing the number of users beyond $100$, the area becomes more and more saturated and thus, the optimal location of the DBSs and their transmit power do not change dramatically which explains the concavity of the transmit power. Moreover, it is seen that as the number of users increases, the DBSs use their maximum transmit power to provide the required QoS for the ground users.

Fig.~\ref{fig:powerVSdensity} shows the average transmit power of the DBSs versus the density of the users for the optimal placement of the DBSs and the Voronoi tessellation. The users are distributed according to the truncated Gaussian distribution with the hotspot center being located at the center of area, i.e., $(\mu_x, \mu_y) = (0,0)$. In order to draw a fair comparison, once the optimal number of DBSs for a given user density is obtained, the area is divided into equal subareas and a DBS is placed at the center of each subarea for the Voronoi cell placement. As we can see, the average transmit power for optimal DBS placement approach is significantly lower than the Voronoi case. According to Fig.~\ref{fig:powerVSdensity}, the Voronoi case is more sensitive to the users' density compared to the optimal DBS placement approach. This is due to the fact that the  DBS placement approach is determined based on the location of the users such that the transmit power is minimized. However, in the Voronoi case, location of the DBSs is set without considering the users' location. As observed in Fig.~\ref{fig:powerVSdensity}, for the low user density case in which the users are more spread over the area, the performance of Voronoi and optimal cell boundaries are close. However, as the density increases, the proposed optimal case becomes better but then they get close again. The reason is that, for highly dense scenarios, the area becomes more saturated and the users spread over a larger area. Thus, the number of DBSs and subareas in the Voronoi tessellation increases and power efficiency for the Voronoi case is improved.

Fig.~\ref{fig:numVSnum} shows the optimal number of DBSs in order to satisfy the coverage requirement of the ground users with minimum average transmit power. In this figure, we can see that number of DBSs is a monotonically increasing function of the number of ground users. However, the number of DBSs does not solely depend on the number of users, it also depends on how the users are distributed in the area. According to Fig.~\ref{fig:numVSnum}, for a large number of users, the required number of DBSs in a congested hotspot scenario is less than the scenario in which the users are evenly distributed in a larger area. However, this result may be misleading at the first glance. Providing coverage with a smaller number of DBSs is not necessarily equivalent to having a better power efficiency. In fact, the optimal number of DBSs and power efficiency are significantly dependent on the topology of the users. Fig.~\ref{fig:numVSnumHomo} shows the the number of DBSs in a homogeneous network as a function of the number of ground users for a uniform distribution. Similar to Fig.~\ref{fig:numVSnum}, we can see that that the number of DBS is a monotonically increasing function of the number of ground users. However, as the power capacity of the DBSs increases, a lower number of DBSs is required to service a given number of users. This is to the the fact that more transmit power corresponds to larger coverage disk on the ground. For instance, according to Fig.~\ref{fig:numVSnumHomo}, for $140$ ground users, the number of required DBSs with transmit power $43$dB, $39$dB, and $35$dB are 5, 6, and 7, respectively.

Fig.~\ref{fig:rate} shows the users' average received data rate versus the number of users for two different distribution models. According to Fig.~\ref{fig:rate}, the average received data rate for a given number of users is significantly lower in hotspot areas. This is in fact due to the severe interference caused by the neighboring DBSs located in the hotspot area compared to the more distant and presumably low-interference DBSs when deployed in a regular area with uniformly distributed users. Although the latter case requires more DBSs according to \ref{fig:numVSnum}, it is more power efficient and offers a higher data rate for the ground users. Fig.~\ref{fig:rateHomo} shows the users' average received data rate in a homogeneous network of DBSs. It shows that as the number of users increases, the average received data rate decreases. In homogeneous networks, the more transmit power the DBSs possess, the higher data rate they can provide for the ground users. However, the higher data rate comes at the price of higher average transmit power (see Fig.~\ref{fig:powerVSnumber}), which is in agreement with our intuition. One interesting feature in Fig.~\ref{fig:rateHomo} is the change of its concavity as the number of users increases. Indeed, by increasing the number of users which demands for more DBSs, the chance of severe intercell interference increases which explains the concavity change in Fig.~\ref{fig:rateHomo}.

\begin{figure}
    \centering
    \begin{subfigure}[b]{0.45\textwidth}
        \includegraphics[width=\textwidth]{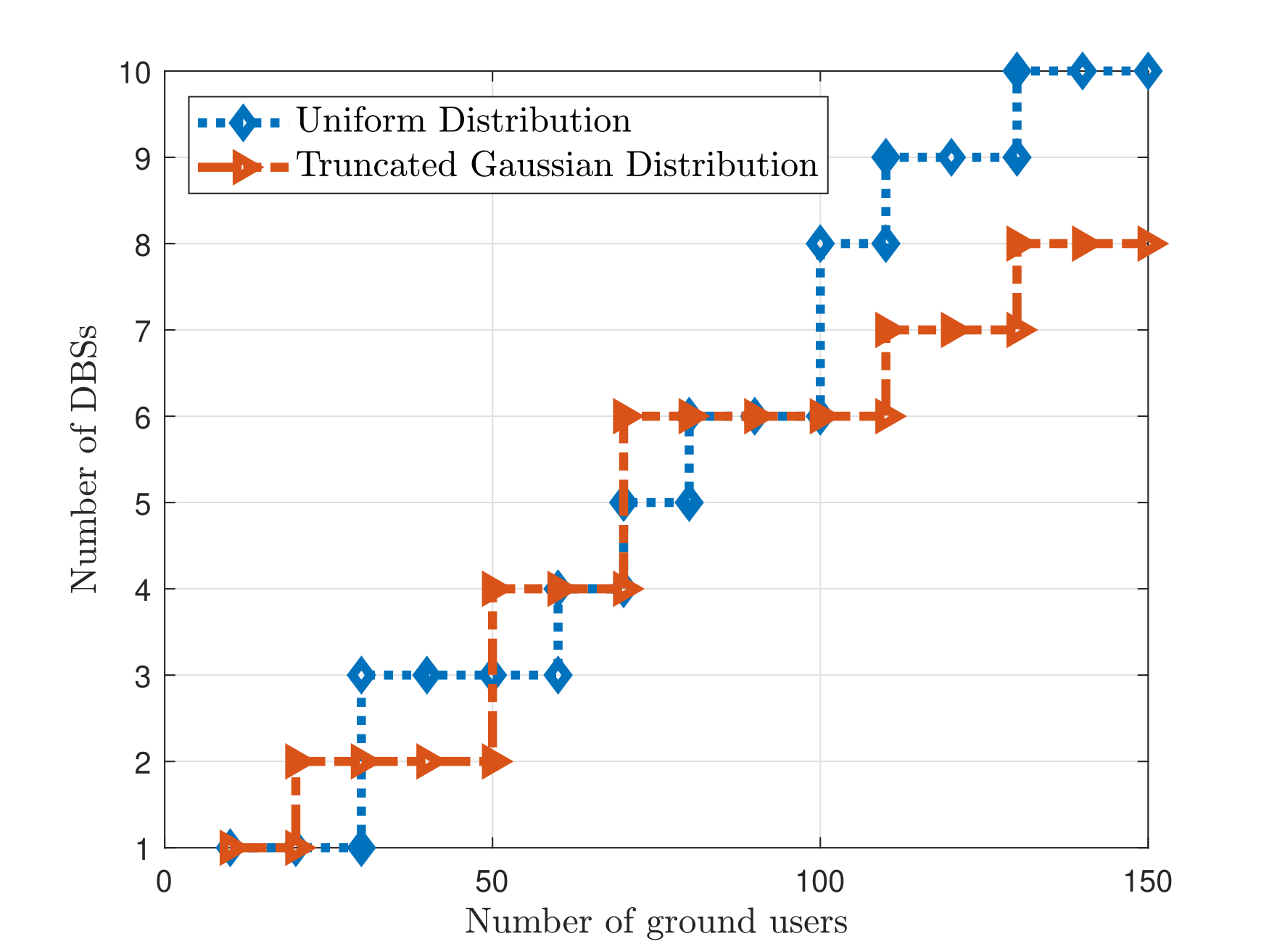}
        \caption{}
        \label{fig:numVSnum}
    \end{subfigure}
    ~ 
    \begin{subfigure}[b]{0.45\textwidth}
        \includegraphics[width=1\textwidth]{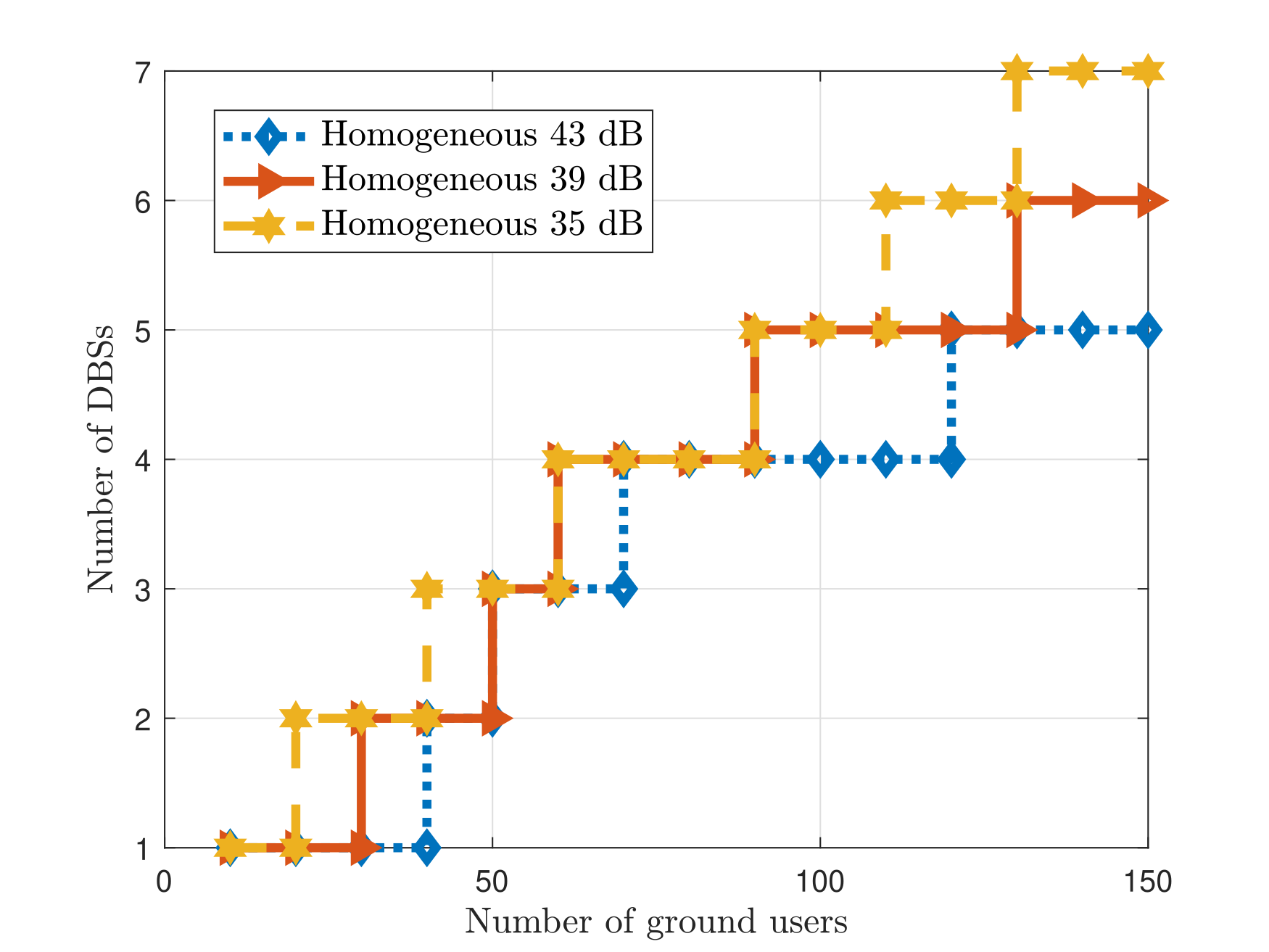}
        \caption{}
        \label{fig:numVSnumHomo}
    \end{subfigure}
    \caption{The number of DBSs vs. the number of ground user: (a) the deployment of the heterogeneous repository of the DBSs to provide service for ground users for two different distributions; and (b) the deployment of identical DBSs (i.e., homogeneous DBS network) for uniform user distribution.}\label{fig:snapshot2}
\end{figure}



\begin{figure}
    \centering
    \begin{subfigure}[b]{0.45\textwidth}
        \includegraphics[width=\textwidth]{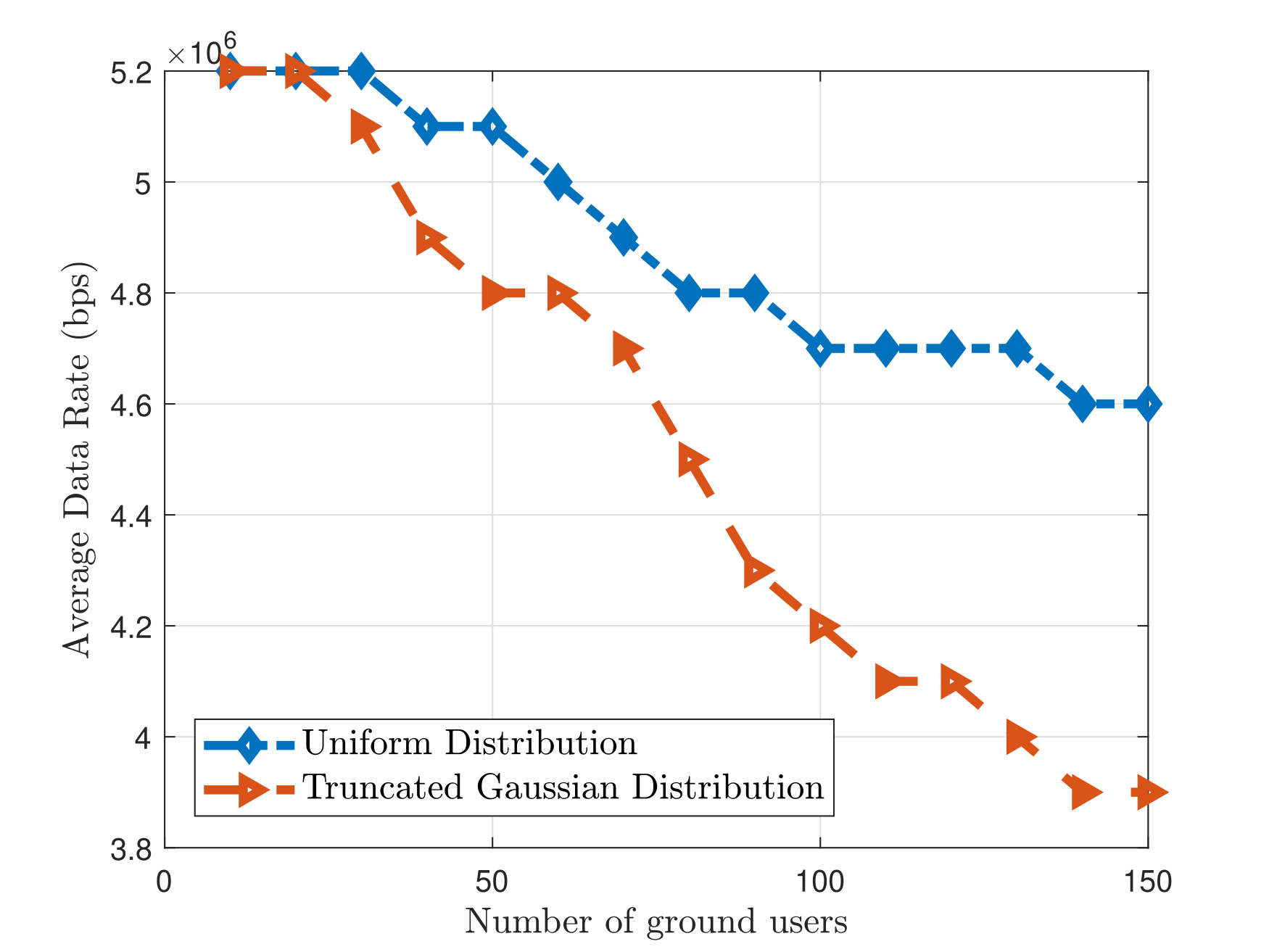}
        \caption{}
        \label{fig:rate}
    \end{subfigure}
    ~ 
    \begin{subfigure}[b]{0.45\textwidth}
        \includegraphics[width=1\textwidth]{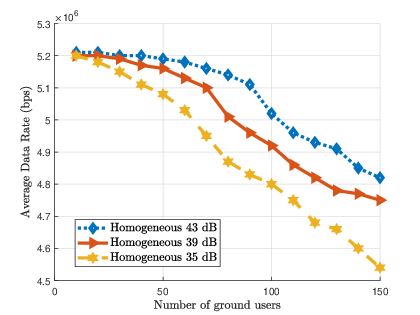}
        \caption{}
        \label{fig:rateHomo}
    \end{subfigure}
    \caption{Average received data rate versus the number of ground users: (a) the deployment of the heterogeneous repository of the DBSs to provide service for ground users for two different distributions; and (b) the deployment of identical DBSs (i.e., homogeneous DBS network) for uniform user distribution.}\label{fig:snapshot3}
\end{figure}



\section{Conclusions}\label{sec:conclusion}
This paper provided a new resource allocation and optimal 3D placement method for drone-based wireless networks. Given a heterogeneous set of DBSs, the proposed method finds the optimal number of DBSs out of the available resources and determines their optimal location to satisfy the ground users' rate requirement while maintaining the minimal aggregate transmit power. The optimization problem to tackle this problem is NP-hard and too complex to solve due to the sheer number of unknowns and the interdependence of optimization variables. Therefore, the optimization problem is decomposed into two sub-problems which are solved iteratively. First, ignoring the intercell interference between the DBSs, we find the optimal subset of DBSs along with their optimal 3D location to cover the ground users based on the received SNR criterion. Next, a Kalai–Smorodinsky bargaining solution is developed to address the intercell interference between the DBSs in a fair manner and increase the users' data rate, which is a function of SINR. These two sub-problems are solved iteratively until the algorithm converges to the optimal solution. It is worth noting that the proposed algorithm requires low to moderate computational resources and offers a practical solution to be implemented on the DBSs. The results have shown the effectiveness of the developed algorithm in satisfying the users' data rate while using minimum transmit energy.

\section*{acknowledgement}
This material is based upon work supported by the Air Force Office of Scientific Research under award number FA9550-20-1-0090 and the National Science Foundation under Grant Numbers CNS-2034218, CNS-2204445, ECCS-2030047 and CNS-1814727.
\bibliography{myref}
\bibliographystyle{IEEEtran}

\end{document}